\PassOptionsToPackage{final}{graphicx}
\PassOptionsToPackage{final}{hyperref}
\PassOptionsToPackage{notref,notcite}{showkeys}
\documentclass[final,acmsmall,screen,nonacm]{acmart}

\pdfoutput=1 %

\usepackage{ifthen}

\newboolean{spellingmode}
\setboolean{spellingmode}{false}

\usepackage{amsthm}
\usepackage{thmtools}
\usepackage{mathtools}
\usepackage{xspace}
\usepackage{booktabs}
\usepackage{subcaption}
\usepackage{stmaryrd} 
\usepackage{hyperref}
\usepackage{twshowkeys}
\usepackage[nothmenv]{twmath}
\usepackage{tikz}

\usetikzlibrary{backgrounds,calc,fit}

\usepackage{coqsrcref}

\usepackage{fixme}
\FXRegisterAuthor{tw}{atw}{TW}%
\FXRegisterAuthor{jm}{ajm}{JM}%

\newcommand{\nicehref}[2]{%
\begin{tikzpicture}[baseline=(txt.base)]
\node[text=blue!80!white!40!black,inner sep=0pt,text depth=1.1pt] (txt)
  {\href{#1}{{#2}{\hspace*{1pt}\raisebox{.0pt}{\color{blue!30!white}}}}};
\begin{scope}[on background layer]
\draw[color=blue!25!white] (txt.south west) -- (txt.south east);
\end{scope}
\end{tikzpicture}%
}

\newcommand{\onlineHtmlURL}{%
  https://arxiv.org/src/2602.16427v2/anc/html/%
}

\newsavebox{\logoagdabox}
\sbox{\logoagdabox}{%
  \begin{tikzpicture}[baseline={([yshift=2pt]logo.south)}]
  \node[inner sep=0pt] (logo) {\includegraphics[height=1em]{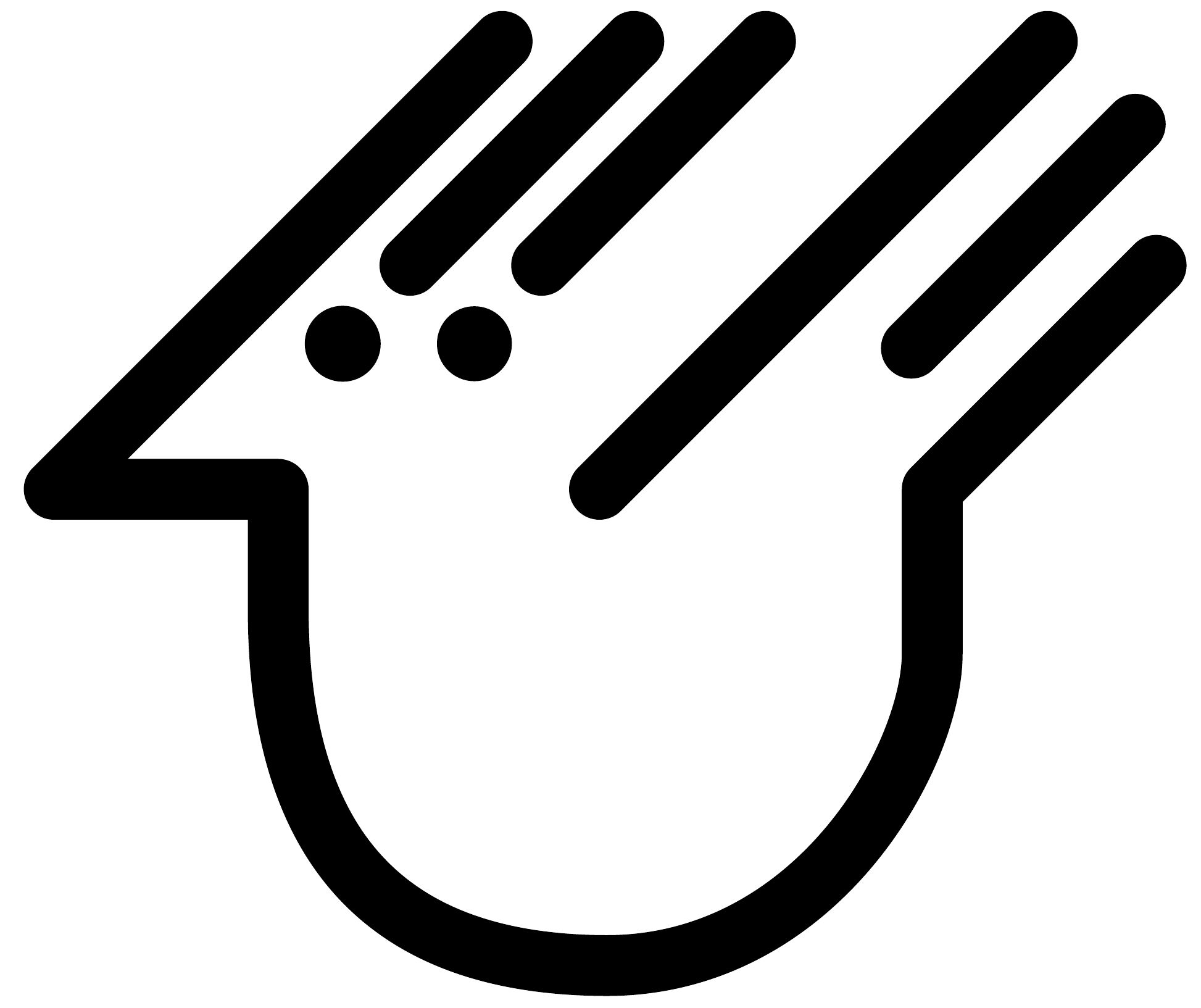}};
  \end{tikzpicture}%
}

\newcommand{\agdaref}[4][]{%
  \coqref{#2}{#3}{#4%
  }{\href{\onlineHtmlURL #2.html\##3}{\usebox{\logoagdabox}}}}
\newcommand{\agdarefcustom}[4]{%
  \coqrefcustom{#1}{#2}{#3}{\href{\onlineHtmlURL #2.html\##3}{\usebox{\logoagdabox}}}%
  {#4%
  }}

\ifthenelse{\boolean{spellingmode}}{%
	\AtEndPreamble{
	\hyphenpenalty=50000 %
	\pagestyle{empty}%
	\thispagestyle{empty}%
	\def\pagestyle#1{}%
	\def\thispagestyle#1{}%
	\def\labelmarginpar#1{}%
	}
}{}

\theoremstyle{plain}
\newtheorem{theorem}{Theorem}[section]

\newtheorem{definition}[theorem]{Definition}
\newtheorem{notation}[theorem]{Notation}
\newtheorem{observation}[theorem]{Observation}
\newtheorem{corollary}[theorem]{Corollary}
\newtheorem{remark}[theorem]{Remark}

\newcommand{\lstar}{\ensuremath{L^{*}}}
\newcommand{\lsharp}{\ensuremath{L^{\#}}}

\title{Formalized Run-Time Analysis of Active Learning -- Coalgebraically in Agda} %

\author{Thorsten Wißmann}
\orcid{0000-0001-8993-6486}
\affiliation{%
  \institution{Friedrich-Alexander-Universität Erlangen-Nürnberg}
  \city{Erlangen}
  \country{Germany}}
\ccsdesc[500]{Theory of computation~Query learning}
\ccsdesc[500]{Theory of computation~Logic and verification}
\ccsdesc[300]{Theory of computation~Type theory}
\ccsdesc[300]{Theory of computation~Regular languages}

\keywords{Automata Learning, Games, Proof Formalization, Coalgebra, Agda} %

\newcommand{\D}{\ensuremath{\mathbb{D}}}
\newcommand{\Pow}{\ensuremath{\mathcal{P}}}

\newcommand{\DFA}{\ensuremath{\mathsf{DFA}}}
\newcommand{\Mealy}{\ensuremath{\mathsf{Mealy}}}
\newcommand{\MQ}{\ensuremath{\mathsf{MQ}}}
\newcommand{\EQ}{\ensuremath{\mathsf{EQ}}}
\newcommand{\CO}{\ensuremath{\mathcal{O}}}

\newcommand{\mathqt}[1]{\ensuremath{\mathord{\text{\textsf{\upshape `#1'}}}}}
\newcommand{\tick}{\ensuremath{\mathsf{tick}}}
\newcommand{\allows}{\ensuremath{\mathsf{allows}}}
\newcommand{\correct}{\mathqt{correct}\xspace}
\newcommand{\toohigh}{\mathqt{too-high}\xspace}
\newcommand{\toolow}{\mathqt{too-low}\xspace}
\newcommand{\halfopen}[1]{\ensuremath{[#1)}}
\newcommand{\floor}[1]{\ensuremath{\left\lfloor #1\right\rfloor}}
\newcommand{\ceil}[1]{\ensuremath{\left\lceil #1\right\rceil}}
\newcommand{\longto}{\longrightarrow}
\newcommand{\partialto}{\rightharpoonup}
\newcommand{\sem}[1]{\llbracket #1\rrbracket}
\newcommand{\modelsall}{\mathrel{\mathord{\,\models}\!\mathord{\forall\,}}}

\usepackage{newunicodechar}
\newunicodechar{⟦}{\ensuremath{\llbracket}}
\newunicodechar{⟧}{\ensuremath{\rrbracket}}
\newunicodechar{⊧}{\ensuremath{\mathord{\models}}}
\newunicodechar{∀}{\ensuremath{\mathord{\forall}}}
\newunicodechar{δ}{\ensuremath{\mathord{\delta}}}
\newunicodechar{∋}{\ensuremath{\mathord{\ni}}}
\newunicodechar{⁻}{\ensuremath{{}^{-}}}
\newunicodechar{¹}{\ensuremath{{}^{1}}}

\begin{document}
\begin{abstract}
  The objective of automata learning is to reconstruct the implementation of a hidden
  automaton, to which only a teacher has access. The learner
  can ask certain kinds of queries to the teacher to gain more knowledge about
  the hidden automaton. The run-time of such a learning algorithm is then measured in the
  number of queries it takes until the hidden automaton is successfully
  reconstructed, which is usually parametric in the number of states of that
  hidden automaton.
  How can we prove such a run-time complexity of learning algorithms
  in a proof assistant if we do not have the hidden automaton and the number of
  states available?

  In the present paper, we solve this by considering learning algorithms
  themselves as generalized automata, more specifically as coalgebras.
  We introduce formal and yet compact definitions of what a learner and a
  teacher are, which make it easy to prove upper and lower bounds of different
  kinds of learning games in the proof assistant Agda.

  As a running example, we discuss the common number guessing game where a teacher
  thinks of a natural number and answers guesses by the learner with
  `correct', `too high', or `too low'. To demonstrate our framework, we formally prove
  in Agda
  that binary search finds the teacher's secret number $n$ within $\mathcal{O}(\log n)$
  guesses and that no learning strategy can guarantee fewer than $\log_2(n)$ guesses.

  We apply our framework to automata learning to prove the following complexity
  bounds in Agda:
  (1) If the teacher does not provide counterexamples, then there is no learning
  algorithm that uses only polynomially many queries.
  (2) The $L^{\#}$-learning algorithm takes $\mathcal{O}(k\cdot n^2 + n\cdot \log(m))$
  queries to learn a Mealy machine with $k$ input symbols, $n$ states, where
  $m$ is the maximum length of the teacher's counterexamples.
\end{abstract}

\maketitle

\section{Introduction}
In her seminal paper~\cite{ANGLUIN198787}, Dana Angluin describes the problem
of reconstructing an unknown automaton-implementation from a black box by
considering a game between a learner and a minimally adequate teacher.
The teacher is in possession of a regular language, usually represented by a
deterministic finite automaton (DFA), and it is the task of the
learner to reconstruct
the automaton, by asking two kinds of questions to the teacher:
\begin{enumerate}
\item Membership queries: the learner can ask whether a word $w$ is accepted by
the automaton, to which the teacher answers with \textqt{yes} or \textqt{no}.
\item Equivalence queries: the learner can conjecture a concrete DFA and
present it to the teacher. If the automaton is correct, in the sense that the
DFA accepts the same language as the teacher's automaton, then the game ends and the learner has
succeeded. Otherwise, the teacher proves the conjecture wrong by returning a
\emph{counterexample}, that is, a word
for which the teacher's DFA and the conjectured DFA by the learner behave
differently.
\end{enumerate}
There may be multiple counterexamples, so the teacher enjoys a degree of
freedom when answering equivalence queries. Or in other words, there are
multiple minimally adequate teachers for the same hidden automaton.
On the learner side, many learning algorithms have
been developed in the past decades, including $\lstar$ from Angluin's
paper~\cite{ANGLUIN198787}, TTT~\cite{IsbernerHS14}, $\lsharp$~\cite{VaandragerGRW22},
and $L^\lambda$~\cite{HowarS22LLambda}. These algorithms differ in the
internal structure of their state space: some use tables to represent their
knowledge and some use trees. Moreover, from the teacher's point of view, they
differ in their \emph{strategy}, i.e.~in the number of queries they ask and
when they ask which query.
Generic learning algorithms~\cite{BarloccoEA19,ColcombetPS21,HeerdtS017,UrbatS20} have been
developed to unify learning algorithms for different automata types, with a
clear focus on the learning side of the game.
The study of lower bounds for learning, however, is concerned with the teacher's
side, because it entails the question of how \emph{adversarial}
a teacher can be~\cite{BalcazarDG97,KrugerGV23}.

In the present work, we lay foundations for
the formal verification of learning algorithms and the formalized study of lower
bounds of active learning by these
contributions:
\begin{enumerate}
\item We provide very concise definitions of the notions \emph{game type} (available queries),
  the queries' semantics, \emph{learner}, and \emph{teacher}. We cover both
  learning games for different kinds of automata and a natural number
  guessing game that serves as a running example.
\item We provide a notion of \emph{upper bound} of learning algorithms that
  does not rely on the teacher's hidden automaton but only the semantics of the
  teacher's responses. Our main \autoref{thm:learner-correct} is a generic
  proof principle to verify such an upper bound for learning algorithms.
  We demonstrate this principle for binary search and for the
  $\lsharp$~\cite{VaandragerGRW22} automata learning algorithm.
\item The above notions allow us to fully formalize our results in the proof
  assistant Agda, which marks a starting point for proving correctness and
  run-time complexity of learning algorithms in a formalized setting.
  Our formalization is available on
  \begin{center}
  \nicehref{\onlineHtmlURL index.html}{\onlineHtmlURL index.html}
  \end{center}
  Throughout the paper, we mark definitions and results with a clickable
  icon
  \href{\onlineHtmlURL index.html}{\usebox{\logoagdabox}}
  that points directly to the respective location in the formalization.
  For printed documents, \autoref{agdarefsection} lists this mapping
  explicitly and provides additional comments on formalization aspects.
  Concretely, we formalize the following results in Agda:
  \begin{enumerate}
  \item For natural number guessing,
  binary search takes at most $3 + 2\cdot\floor{\log_2(n)}$ queries (\autoref{exBinSearch}),
  where $n$ is the teacher's secret number. This bound is optimal up to a
  constant factor, because an adversarial teacher can force every learner to
  take more than $\floor{\log_2(m)}$ queries, even if learner and teacher agree
  on an interval of size $m$ in advance (\autoref{thmNatLower}).
  \item If in DFA-based automata learning, the equivalence query does not yield a counterexample but only a simple \textqt{yes}/\textqt{no}, then there is no polynomial learning algorithm (\autoref{thmNoPolyLearn}).
  \item The \lsharp{} learning algorithm for Mealy machines with $k$ input
  symbols takes at most 
    \[
        (k + 1) \cdot n \cdot (n + 1) + (n + 1) \cdot \ceil{\log_2 m} + 1
    \]
    queries to learn a Mealy machine with $n$ states, where $m$ is the maximum
    length of the teacher's counterexamples.
  \end{enumerate}
\end{enumerate}

\section{Key Ideas \& Natural Number Guessing}\label{sec:overview}
Before introducing formal definitions of automata learning, the present section explains the
conceptual problems that arise when formally proving the correctness of active
learning algorithms. We do so on a simple learning game on natural numbers that
exhibits every construct needed later for the formal analysis of automata
learning.

\begin{quote}\itshape
  Consider the learning game between a teacher and a learner, in which the teacher
  thinks of a natural number. It is the goal of the learner to reconstruct this number by
  consecutively guessing natural numbers.
  After each guess, the teacher replies with \correct, \toohigh, or
  \toolow.
  If the response is \correct, the learner wins and the game ends.
\end{quote}

How can we prove that winning strategies for the learner, that is
\emph{learning algorithms}, are correct?
By design of the game (both for natural number guessing and automata learning),
\emph{partial
correctness} comes for free: there is only one way for the learning game to
end, namely when the teacher confirms that the learner's hypothesis is correct.
Hence, the correctness of a learning algorithm boils down to proving a bound on
the number of queries until the teacher accepts the learner's hypothesis.

\begin{quote}\itshape
How can one prove a run-time bound that is parametric in an object which the
algorithm, and hence a formal proof, does not possess?
\end{quote}

Let us consider two examples of learning algorithms in the natural number
guessing game:

\begin{example}[Enumerator]\label{exEnumerator1}
Consider a learner that makes the guesses $0$, $1$, $2$, $3$, $\ldots$
enumerating all natural numbers. This learning strategy takes $1 + n$ guesses if
the teacher's secret number is $n$.
\end{example}
This bound is intuitively clear: if the teacher's secret is $0$, then the
learner needs one guess, if the teacher's secret is $1$, then it takes two
guesses and so on.
A formal proof of such a bound on guesses entails that a learning algorithm wins
against an arbitrary teacher. So for a rigorous formalization, one needs to
specify what learners, teachers, and rounds in the game are as mathematical
objects.

Intuitively, it is clear what a learning algorithm for natural number guessing
consists of:
\begin{enumerate}
\item The learner accumulates knowledge, so the learner may have a set $C$
  of internal states or memory, which is initially $q_0\in C$.
  In this memory, the learner can for example keep track of the teacher's
  responses.
  For example, for the enumerator (\autoref{exEnumerator1}), the internal state
  can simply be $C :=\N$, being the number which is guessed next.
\item The learner's next action consists of guessing a number in $\N$, which
  depends on the current internal state $q\in C$.
  If the guess is correct, then the game ends and so there are no obligations
  left for the learner.
  If the guess is wrong, there are two options: the guess was \toohigh{} or
  \toolow{}, and so the learner needs to specify how to proceed in either case.
\end{enumerate}

We turn this into a mathematical definition as follows:
\begin{definition}\label{natlearner}
  Let $W := \set{\toohigh,\toolow}$.
  A \emph{learner} consists of a set $C$ (called \emph{states}), a set $R$ (called \emph{results}), an initial state $q_0\in C$,
  and a map:
  \[
    c\colon C\longto \N \times R \times C^W
  \]
\end{definition}
Here, we write $C^W$ for the set of all maps $W\to C$.
The map $c$ describes the strategy of the learner: when in state $q\in C$, the only thing the learner can do
is to guess a number. The map $c(q) = (n,r,f)\in \N\times R \times C^W$ provides this guess $n\in \N$.
If the guess is correct, then the game ends immediately (with result $r\in R$,
which is discussed after \autoref{obsMoore}).  For the case in which the guess is
wrong, the learner provides a \textqt{continuation} function $f\colon W\to C$. Depending
on whether the guess $n$ was too high or too low, $f(w)\in C$ for $w\in
W=\set{\toohigh, \toolow}$ specifies in which state the learner is going to proceed.

Recall that a \emph{Moore automaton} $M = (Q,q_0,o,\delta)$ for an input alphabet $A$ and an output alphabet $O$
consists of a finite set of states $Q$, an initial state $q_0\in Q$, an
output function $o\colon Q\to O$, and a transition map $\delta\colon
Q\times A\to Q$.

\begin{observation}\label{obsMoore}
  A learner in the natural number guessing game is a (possibly infinite)
  Moore-automaton for the
  input alphabet $W$ and output alphabet $\N\times R$:
  \[
    q_0\in C
    \quad
    C\to (\N\times R)
    \quad
    C\times W \to C
  \]
  In state $q\in C$, such a Moore automaton has an output of type $\N\times R$
  and upon receiving an input $W$, the automaton transitions to a new state
  according to the map $C\times W\to C$.
\end{observation}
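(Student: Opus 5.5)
The plan is to make the correspondence of \autoref{obsMoore} explicit as a pair of mutually inverse translations between the data of \autoref{natlearner} and the data of a Moore automaton with input alphabet $W$ and output alphabet $\N\times R$. The only substantive ingredient is the currying bijection $\mathrm{Set}(C, Y\times Z)\cong \mathrm{Set}(C,Y)\times\mathrm{Set}(C,Z)$ combined with $\mathrm{Set}(C, C^W)\cong \mathrm{Set}(C\times W, C)$.

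\textbf{From learner to Moore automaton.} Given a learner $(C,R,q_0,c)$ with $c\colon C\to \N\times R\times C^W$, write $c(q)=(n_q,r_q,f_q)$. Keep the state set $C$ and the initial state $q_0$. Define the output map by $o(q) := (n_q,r_q)\in \N\times R$, and the transition map by $\delta(q,w) := f_q(w)$, i.e.\ uncurry the third component.

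\textbf{From Moore automaton to learner.} Conversely, from $(C,q_0,o,\delta)$ with $o\colon C\to\N\times R$ and $\delta\colon C\times W\to C$, put
\[
  c(q) := \bigl(\pi_1(o(q)),\, \pi_2(o(q)),\, \lambda w.\,\delta(q,w)\bigr).
\]
Both composites are identities: on the learner side this follows from surjective pairing for triples and $\eta$-equality for functions, and on the automaton side from currying followed by uncurrying. In Agda both directions hold definitionally, assuming record/$\eta$ rules, so the proof reduces to \texttt{refl}.

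\textbf{Finiteness.} The one point needing care is that Moore automata were defined with a \emph{finite} state set, while $C$ may be infinite; the enumerator of \autoref{exEnumerator1} has $C=\N$. This is why the observation says ``possibly infinite''. The proof therefore simply drops the finiteness condition, and I expect no genuine obstacle beyond stating that relaxation. Optionally, I would also remark that the game semantics match: feeding the teacher's sequence of responses $w_1w_2\cdots\in W^*$ to the automaton reproduces exactly the sequence of guesses the learner makes.
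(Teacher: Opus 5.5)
Your proposal is correct and takes the same route as the paper, which treats the observation as immediate from \autoref{natlearner}: split $c\colon C\to \N\times R\times C^W$ into the output map $C\to \N\times R$ and the continuation $C\to C^W$, uncurry the latter to $C\times W\to C$, and drop the finiteness requirement on the state set. One small terminological slip: the correspondence between maps $C\to Y\times Z$ and pairs of maps $C\to Y$, $C\to Z$ is the universal property of the product, not currying; only the step from $C\to C^W$ to $C\times W\to C$ is currying.
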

It may be surprising at first that the learner does not specify a successor
state for the case of a correct guess. But this is entirely analogous to a Moore
automaton (or a DFA), which does not specify a successor state for
\textqt{end of input}. Likewise, for learning, if the learning game ends, the
learner's strategy does not need to be evaluated any further.
If the learner has the desire
to pass additional information of some type $R$ (e.g.\ logs) to the outside world, then the
learner can do this by specifying respective $r\in R$ along with the query.
If not, the learner can put $R := 1$ (where $1$ is the one element set $1 :=
\set{0}$) and so $R$ vanishes
from the type $c\colon C\to \N\times C^W$.

\begin{example}[Binary search, \agdaref{NatExamples.SimpleLogLearner}{simple-bisect}{In Agda,
we use the full definition of learner that also contains a result set $R$.}]\label{exLogLearner}
  As the learner's state set, consider these intervals on $\N$:
  \[
    C := \set{\halfopen{n,\infty}\mid n\in \N} \cup \set{[n,m]\mid n, m\in \N}
    \cong \N + \N\times \N
    \qquad
    R := 1
    \qquad
    q_0 = \halfopen{0,\infty}
  \]
  where $+$ denotes the disjoint union of sets.
  The state of the learner describes the range in which the teacher's secret
  number must be.
  \[
    c(\halfopen{n,\infty}) = \big(2\cdot n+1, 
    \begin{array}{l@{\,}l@{}}
      \toolow &\mapsto \halfopen{2\cdot n+2,\infty} \\
      \toohigh &\mapsto [n, 2\cdot n]\\
    \end{array}\big)
    \in \N\times C^W
  \]
  \[
    c([n,m]) = \big(\floor{\frac{n+m}{2}}, 
    \begin{array}{l@{\,}l@{}}
      \toolow &\mapsto [\min(m,\floor{\frac{n+m}{2}}+1),\quad m] \\
      \toohigh &\mapsto [n, \quad\max(n, \floor{\frac{n+m}{2}}-1)]\\
    \end{array}\big)
    \in \N\times C^W
  \]
  Initially, the learner starts with half-open intervals and tries to find the
  next power of 2 that is bigger than the teacher's secret number. Once the
  learner has found out that the number is lower, the learner performs a binary search.
  Note that there is a difference between the learner \emph{already knowing} the 
  correct number and \emph{having guessed} it:
  if the learner ends up in the singleton interval $[n,n]$, then the learner knows
  the secret number must be $n$, but it takes one more guess to prove the learner right.
  A similar
  example run is listed in \autoref{run:knowB4guess}.
\end{example}
Dually to the learner, the teacher receives guesses and answers:

\begin{definition}[\agdaref{NatLearning}{NatTeacher}{}]\label{natteacher}
  A \emph{teacher} consists of
  a set $T$, an element $s_0\in T$, and a map:
  \[
    \delta\colon T \times \N \longto \set{\mathqt{correct}} + W\times T
  \]
  $T$ is the set of states and $s_0$ the initial state.
  We call a teacher \emph{stateless} if $|T|=1$.
\end{definition}
Intuitively, the teacher is in some state $s\in T$, initially $s_0$, which
represents what the teacher wants to remember between queries. In state $s\in T$, the response of the teacher 
to the guess $n\in \N$ is given by $\delta(s,n)$. If $\delta(s,n) = (w, s') \in W\times T$, then the 
guess $n$ was wrong: it was either $w=\toohigh$ or
$w=\toolow$. Furthermore, the teacher transfers to state $s'\in T$ for
the next guess by the learner. If however $\delta(s,n) = \mathqt{correct}$,
then the learning game ends, and so the teacher also does not need to specify a
successor state. This can be interpreted as either a correct guess by the
learner or by the teacher simply surrendering.
\begin{example}[\agdaref{NatExamples.HonestTeacher}{honest-teacher}{}]\label{exNatHonest}
  For every number $n\in \N$, there is a stateless teacher ($T:=1$) that
  answers queries in the expected honest manner.
  Note that we can also consider an honest teacher for a real number $r\in\R$:
  One can understand this as the task of learning the weight of an unknown
  item using a balanced scale and weights of 100g each. If the item weighs
  750g, then this may lead to a run as in \autoref{tab:contra}.
  Here, we interpret the learner as the winner because the teacher was convicted of inconsistency.
\end{example}
\begin{example}
Another example of a stateless teacher is the constant function that always
returns \toolow. An example run for this teacher is shown in
\autoref{tab:always2low}. Note that this run cannot be distinguished from an
honest teacher with the secret $10^{1000}+1$.
This is analogous to the application of automata learning to a
non-regular language: no matter how big the hypothesis DFA by the
learner is, the teacher will always find a longer counterexample.
\end{example}

\begin{example}[\agdaref{NatExamples.AdversarialTeacher}{adversarial-teacher}{In the formalization, the teacher's state is the set
  \[
    T = \set{ (c, [n,m]) \mid n\le m\in \N, c\in \N, 2^c \le 1 + (m-n) }
  \]
  Moreover, $\delta$ in the formalization explicitly distinguishes the cases whether
  the learner's guess is within or outside the teacher's interval.
  }]\label{exAdversarial}
  The \emph{adversarial teacher} has the set of non-empty intervals as its
  state space and splits this interval in half after each query:
  \[
    T = \set{ [n,m] \mid n \le m \in \N }
  \quad
    \delta([n,m],k) = \begin{cases}
      \correct &\text{if }n=k=m \\
      (\toolow,[\max(k+1,n),m]) & \text{if }k \le\floor{\frac{n+m}{2}} \\
      (\toohigh,[n,\min(k-1,m)]) & \text{if }k>\floor{\frac{n+m}{2}} \\
    \end{cases}
  \]
  Even though we have not yet defined the game and its semantics, one can see the idea that
  when initializing the teacher with the interval $s_0 = [n_0,m_0]$, then it takes
  any learner more than $\log_2(m_0-n_0)$ queries to obtain the affirmative
  $\correct$ from the adversarial teacher. Later in \autoref{thmNatLower}, we
  will make this observation formal.
\end{example}
\begin{table}
  \begin{subfigure}{.3\textwidth}
    \begin{tabular}{@{}ll@{}}
      \toprule
      Query
      & Response \\
      \midrule
      1 & \toolow \\
      3 & \toohigh \\
      2 & \correct \\
      \bottomrule
      \end{tabular}
      \caption{Learner wins}
      \plainlabel{run:knowB4guess}
  \end{subfigure}
  \begin{subfigure}{.3\textwidth}
    \begin{tabular}{@{}ll@{}}
      \toprule
      Query
      & Response \\
      \midrule
      $5$ & \toolow \\
      $8$ & \toohigh \\
      $7$ & \toolow \\
      \bottomrule
      \end{tabular}
      \caption{Learner wins}
      \plainlabel{tab:contra}
  \end{subfigure}
  \begin{subfigure}{.3\textwidth}
    \begin{tabular}{@{}ll@{}}
      \toprule
      Query
      & Response \\
      \midrule
      10 & \toolow \\
      $10^{10}$ & \toolow \\
      $10^{1000}$ & \toolow \\
      \bottomrule
      \end{tabular}
      \caption{Teacher wins}
      \plainlabel{tab:always2low}
  \end{subfigure}
  \caption{Example runs of the natural number guessing game, when played for 3 queries}
\end{table}

\section{Generic Types of Games, Learners, and Teachers}
\begin{notation}
We write $1$ and $2$ for the sets of the respective cardinality:
$1=\set{0}$, $2=\set{0,1}$.
For sets $X$ and $Y$, we write $Y^X$ for the set of functions $X\to Y$.
In particular, we write $2^X$ for the set of predicates on $X$, which we
implicitly identify with subsets $2^X \cong \Pow(X)$ whenever it simplifies the
notation. For a subset $S\subseteq X$, we write $S\hookrightarrow X$ to denote
the canonical inclusion function.

We write $X+Y$ for the disjoint union of sets $X$ and $Y$.
The projection
functions for cartesian products $X\times Y$ are called $\prl\colon X\times
Y\to X$ and $\prr\colon X\times Y\to Y$.
\end{notation}

We now generalize the ideas from the previous section to make them applicable
to different types of learning games, starting with the definition of what a
\emph{type of game} should be:
\begin{definition}[\agdaref{Game-Basics}{Game-Type}{We do not need
  functoriality in $R$ in the formalization. Also, we define
  the action of $F$ on maps via \texttt{F-map} as part of the game semantics
  structure.}]
  A \emph{game type} is a functor $F\colon \Set\times \Set\to \Set$.
  Concretely, a game type $F$ is a construction that sends sets $R$ and $X$ to a set $F(R,X)$.
\end{definition}
\begin{remark}
  Functors are a standard concept from category theory~\cite{awodey2010category}.
  For the sake of the present work, it suffices to understand a functor as a
  construction that takes two sets $R$ and $X$ and yields a new set $F(R,X)$.
  One of the axioms of functors states that the construction is compatible with
  maps $f\colon X\to Y$ in the sense that they induce a map $F(R,f)\colon
  F(R,X)\to F(R,Y)$.
\end{remark}
\begin{example}[\agdaref{NatLearning}{F}{}]
  The natural number guessing game has the game type
  \[
    F(R,X) = \N \times R \times X^W.
  \]
\end{example}
\begin{definition}[\agdaref{Game-Definitions}{Learner}{}]
  A \emph{learner} (for game type $F$) consists of a set of states $C$, a set
  of results $R$, an initial state $q_0\in C$ and a map:
  \[
    c\colon C\longrightarrow F(R,C)
  \]
\end{definition}
\begin{remark}
  A learner is a pointed coalgebra for $F(R,-)$~\cite[Def.~3.3]{wellPointedCoalg}.
\end{remark}
If we instantiate $F$ to be the number guessing game type, the learner materializes
to what we have defined before in explicit terms (\autoref{natlearner}).
\begin{example}[\agdaref{DFALearning}{F}{}]
  For automata learning, fix an input alphabet $A$ and let $\DFA$ be the set of
  all deterministic finite automata over $A$. We put as the game type:
  \[
    F(R,X) := A^*\times X^2 + \DFA \times R \times X^{A^*}
  \]
  The first summand (which we call $\MQ$) describes membership queries and the second summand
  (called $\EQ$) describes equivalence queries. For a learner with states $C$, a map $c\colon
  C\to F(R,C)$ means that in each state $q\in C$ the learner can choose between
  two types of queries:
  \begin{enumerate}
  \item In order to perform a membership query for a word $w\in A^*$, the learner
    sets $c(q) = \MQ(w, f)$ where $f\colon 2\to C$ is the continuation: if $w$
    was in the teacher's language, the learner continues in $f(1)$ and
    otherwise, the learner will continue in state $f(0)$.
  \item If the learner conjectures in state $q$ that the hidden automaton is $H\in \DFA$,
    then the learner specifies $c(q) = \EQ(H,r,f)$ where $r$ is additional
    information (can be ignored by $R=1$) and $f\colon A^*\to C$ specifies how
    the learner continues. If $w\in A^*$ is a counterexample to the hypothesis
    $H$, then the learner proceeds in state $f(w)\in C$.
  \end{enumerate}
  Note that the summand $A^*\times X^2$ for membership queries did not mention the
  result type $R$: this is because the learning game is not ended by membership
  queries. So the notion of game type does not only encode the type of queries
  but also which queries may end the learning game and which may not (cf.\
  \autoref{dfateacher} for the construction). This is achieved by
  the following generic definition of what a teacher is:
\end{example}

\begin{definition}[\agdaref{Game-Definitions}{TeacherNaturality.Teacher}{}]\label{defTeacher}
  A \emph{teacher} (for game type $F$) consists of a set of states $T$, an
  initial state $s_0\in T$ and a natural family of maps
  \[
    \alpha_{R,X}\colon
    F(R,X)
    \longrightarrow
    (R + X \times T)^T
    \qquad
    \text{for all sets $R$ and $X$}
  \]
  In other words, let $M_T\colon \Set\times\Set\to\Set$ be the functor
  \(
    M_T(R,X) = (R + X\times T)^T
  \).
  Then, a teacher consists of the following data:
  \[
    \text{a set }T, \qquad s_0\in T,\qquad
    \text{ and }
    \qquad
    \text{ a natural transformation }~
    \alpha\colon 
    F\longto M_T.
  \]
\end{definition}
\begin{remark}
  Naturality~(\agdarefcustom{Naturality}{Game-Definitions}{TeacherNaturality}{})
  enforces that $\alpha$ cannot introspect the types $R$ and $X$;
  we refer to~\cite{awodey2010category} for the categorical details which are not relevant
  for the present work.
\end{remark}
\begin{example}[\agdaref{NatLearning}{mkTeacher}{Its inverse is defined in the same file.}]
  For $F(R,X) = \N \times R \times X^W$, teachers in the generic sense
  $\alpha\colon F\to M_T$
  for a set $T$ correspond to the maps as in \autoref{natteacher} before:
  \[
    \begin{array}[t]{cc}
      \alpha_{R,X}\colon \N\times R \times X^W
      \longto
      (R + X \times T)^T
      \\
      \text{Natural in }R\text{ and }X
    \end{array}
    \quad
    \Longleftrightarrow
    \quad
    \delta\colon
      T\times \N\longto \set{\correct} + W \times T
  \]
  The correspondence is essentially the Yoneda lemma~\cite{awodey2010category}:
  Given a natural transformation, we instantiate $R := \set{\correct}$ and $X:=W$ and then put
  \[
    \delta(t,n) = \alpha_{\set{\correct},W}(n,\correct,\id_W)(t)
    \tag*{\agdarefcustom{\textbullet\ \ensuremath{\delta}}{NatLearning}{mkTeacher-on⁻¹}{}}
  \]
  Conversely, given $\delta$, it is not hard to verify that the obvious
  candidate for $\alpha_{R,X}$ is indeed natural in $R$ and $X$.
\end{example}
\begin{example}[\agdaref{DFALearning}{DFATeacher}{Conversion functions back and forth to the generic teacher notion are in the same file.}]\label{dfateacher}
  For the game type $F(R,X) := A^*\times X^2 + \DFA \times R \times X^{A^*}$ of
  DFA-Learning, our notion of teacher corresponds to a pair of functions, one
  for membership and one for equivalence queries:
  \[
    \begin{array}[t]{cc}
    A^*\times X^2 + \DFA \times R \times X^{A^*}
    \xrightarrow{\alpha_{R,X}}
    (R + X \times T)^T
    \\
    \text{Natural in }R\text{ and }X
    \end{array}
    ~\Longleftrightarrow~
    \left\{
      \begin{array}{@{}r@{\,}l@{}}
      \MQ\colon& T \times A^* \longto 2 \times T \\
      \EQ\colon&  T \times \DFA \longto 1+ A^* \times T \\
      \end{array}
    \right.
  \]
  For a teacher's state $s\in T$, the query $\MQ(s,w)$ returns whether
  $w$ is accepted by the teacher's automaton, together with the teacher's new state.
  The equivalence query $\EQ$ takes a teacher state $s\in T$ and a hypothesis $H\in
  \DFA$. If the hypothesis is correct, then $\EQ(s,H) \in 1$. Otherwise $\EQ$
  returns a counterexample $w\in A^*$ and a successor state.
  The conversion from right to left is straightforward. For the direction from
  left to right, we instantiate $R := 1 = \set{0}$ and $X := A^*$ in order to obtain
  $\EQ(s,H) := \alpha_{1,A^*}((H,0, \id_{A^*}))(s)$. For the membership query,
  we however instantiate $R:=\emptyset$ and $X:=2$, because this forces
  $\alpha_{\emptyset,2}$ to return something in the right-hand component of the
  disjoint union: $\MQ(s,w) = \alpha_{\emptyset,2}(w,\id_2)(s)$.
\end{example}
So the generic notion of teacher instantiates suitably for different game
types. Moreover, by instantiating $X := C$ for a learner on $C$, we can compose
$\alpha$ directly with a learner to define the actual game that is performed
when letting learner and teacher interact: 
\begin{definition}[\agdaref{Game-Definitions}{Game.game-step}{}]
  For a learner $(C,c,q_0)$ and a teacher $(T,\alpha,s_0)$ (of the same game
  type $F$), a \emph{game} is a possibly infinite sequence of state pairs $(q_k,s_k) \in C\times
  T$, $k\in \N$, that starts with the players' initial states $(q_0,s_0)$ and is induced by the map
  \[
    g(q,s) = \alpha_{R,C}(c(q))(s)
    \qquad
    g\colon
    C\times T \longto
    R+ C\times T
  \]
  If $(q_k,s_k)$ is defined and $g(q_k,s_k) \in C\times T$, then $(q_{k+1},s_{k+1}) := g(q_k,s_k)$.
\end{definition}

\subsection{Game Semantics}
For the semantics of a game, we
fix a set $\D$, which is intuitively the domain from which the teacher picks a
secret element. In the learning literature~\cite{Angluin90neg}, the set $\D$ is
typically called the \emph{class of concepts}. For natural number guessing, we
have $\D = \N$ and in the case of DFA learning we have $\D=\DFA$.

\begin{definition}[\agdaref{Game-Definitions}{GameSemantics}{}]
  The \emph{semantics} of a game type $F$ consists of
  a family of maps $\sem{-}_X\colon F(R,X)\to F(R,2^\D\times X)$
  such that $F(R,\prr) \cdot \sem{-}_X = \id_{F(R,X)}$.
\end{definition}
Intuitively, the map $\sem{-}$ equips each $x\in X$ that is mentioned in $y\in
F(R,X)$ with a predicate $P\in 2^\D$ describing the information when the
teacher tells to proceed in $x$.
\begin{example}[\agdaref{NatLearning}{NatSem.⟦F\_⟧}{}]
  For natural number guessing $F(R,X) = \N \times R \times X^W$, we have:
  \[
    \sem{(H,r,f)} = \big(H,r,
      \begin{array}{@{}l@{\,}l@{}}
        \toolow &\mapsto (\set{d\in \N\mid H < d}, f(\toolow)) \\
        \toohigh &\mapsto (\set{d\in \N\mid H > d}, f(\toohigh)) \\
      \end{array}
    \big)
  \]
  So the semantics specifies that if the successor state $f(\toolow)$ is taken,
  then the secret number $d$ is greater than the hypothesis $H < d$. Analogously,
  continuing in $f(\toohigh)$ means that the teacher's secret number $d$ satisfies $H>d$.
\end{example}
\begin{example}[\agdaref{DFALearning}{DFASem.⟦F\_⟧}{}]
  For DFA learning $F(R,X) = A^*\times X^2 + \DFA \times R \times X^{A^*}$ and $\D=\DFA$, we define:
  \begin{align*}
    \sem{\MQ(w,f)} &= \MQ(w,
        b \mapsto (\set{M\in \DFA\mid L(M)(w) = b},~ f(b))
        )
    \\
    \sem{\EQ(H,r,f)} &= \EQ(H,r,
        w \mapsto (\set{M \in \DFA \mid L(M)(w) \neq L(H)(w)},~ f(w))
        )
  \end{align*}
  The semantics of a membership query $\MQ(w,f)$ is that in the successor state $f(b)$ we now learn that the teacher's DFA
  $M$ satisfies $L(M)(w) = b$, where $L(M)\colon A^*\to 2$ is the language accepted by $M$.
  In the equivalence query $\EQ(H,r,f)$, any word $w\in A^*$ returned by the teacher means that for $w$, the teacher's DFA $M$ produces a different output than $H$, i.e.: $L(M)(w) \neq L(H)(w)$.
\end{example}
We can now instantiate the teacher's natural transformation $\alpha_{R,X}$ to
$X := 2^\D \times C$ (for some learner $(C,c,q_0)$) to obtain the semantics of
the teachers in state $s\in T$ to the learner's query in state $q\in C$:
\[
  c(q) \in F(R,C)
  ~~\Rightarrow~~
  \sem{c(q)} \in F(R,2^\D\times C)
  ~~\Rightarrow~~
  \alpha(\sem{c(q)})(s) \in R + 2^\D\times C\times T
\]
\begin{definition}[\agdaref{Game-Definitions}{Game.still-possible}{}]
  We say that $d\in \D$ is \emph{(still) possible} 
  after $n\in \N$ rounds starting from 
  states $(q,s)\in C\times T$ if $n=0$ or if all of these conditions hold:
  \begin{enumerate}
  \item $n>0$
  \item $\alpha(\sem{c(q)})(s) = (P,q',s') \in 2^\D\times C\times T$ ~(the game does not end in this round)
  \item $d\in P$ ~(the next response is consistent with $d$)
  \item $d$ is still possible after $n-1$ rounds starting from $(q',s')$.
  \end{enumerate}
\end{definition}

If $d\in \D$ is still possible after $n$ rounds, it means that judging from the
teacher's $n$ responses, we do not yet know whether $d$ might be the teacher's
secret. Note that this in particular entails that the teacher has not accepted
any of the learner's hypotheses. This is because if the teacher accepts a
learner's hypothesis, then the learner knows for every $d\in \D$ whether $d$ was the
teacher's secret or not.
\begin{definition}[\agdaref{Game-Definitions}{Game.learner\_guesses\_within\_rounds}{}]
  Complementarily, we say that starting from $(q,s)\in C\times T$, the learner
  finds $d\in \D$ within $n\in \N$ rounds, if $d$ is not possible after $n$
  rounds.
\end{definition}
Intuitively, finding $d\in \D$ within $n\in \N$ rounds covers two cases:
\begin{enumerate}
\item If $d$ happens to be the teacher's secret, the learner makes the teacher
accept an equivalence query within $n$ rounds.
\item If $d$ is not the teacher's secret, then the next $n$ responses by
the teacher will refute $d$.
\end{enumerate}
So $n$ can be considered as the maximum bound of queries that a learner needs to learn $d$:
\begin{example}[\agdaref{NatExamples.LinearLearner}{linear-learner-bound}{}]
  In the natural number game, the learner defined by
  \[
    C := \N
    \quad
    R := 1 = \set{*}
    \quad
    q_0 = 0
    \quad
    c(n) = (n,*, w\mapsto n+1)
  \]
  finds every $d\in \N$ within $1+d$ rounds for every teacher.
\end{example}

\subsection{Proving Learner Run-Time}
We provide a general recipe for proving such upper bounds on
queries for generic game types $F$. To this end, we model bounds by a bound
function $b\colon \D\to \N$
denoting for each concept $d\in \D$ the maximum number of queries needed to
find $d$. We use the standard technique of a stepping function, but intertwine
it with game semantics. For this, we consider the following predicate lifting.
\begin{definition}[Predicate Lifting, \agdaref{Game-Definitions}{GameSemantics.\_⊧F∀\_}{
  To make $\modelsall$ easier to work with in the implementation, we require
  the semantics of $F$ to implement $\modelsall$ explicitly as a predicate
  describing those $Z\in FX$ that are in the image of
  $F(R,P\hookrightarrow X)\colon F(R,P)\to F(R,X)$.
}]
  Given sets $R,X$, an element $Z\in F(R,X)$, and a predicate $P\subseteq X$, we define the relation $\modelsall$ by
  \[
    Z\modelsall P
    \quad:\Longleftrightarrow\quad
    \text{there is }Z'\in F(R,P)
    \text{ with }F(R, P\hookrightarrow X)(Z') = Z
  \]
\end{definition}
Intuitively, $\modelsall$ describes a $\Box$-modality: it specifies that all
successor states $x\in X$ that are buried in the structure $Z\in F(R,X)$ have
the property $P$.
We use this modality to provide sufficient conditions for the learner to win
the generic learning game with a bounded number of queries:

\begin{definition}[\agdaref{Learner-Correctness}{Learner-Stepping}{}]\label{defStepBounded}
  A learner $(C,c,q_0)$ for game type $F$ is \emph{step-bounded} by a function $b\colon \D\to \N$
  if the learner can be equipped with the following data:
  \begin{enumerate}
  \item a function $\tick\colon C\to \N$, describing the number of
  queries performed so far.
  \item a function $\allows\colon C\to 2^\D$, describing the concepts from $\D$
  that have not been refuted by the teacher so far.
  \item $\allows(q_0) = \D$ (the initial state does not exclude any concept)
  \item whenever $d\in \allows(q)$, then $\tick(q) < b(d)$ for all $d\in \D$, $q\in C$.
  \item the transition structure $c\colon C\to F(R,C)$ satisfies for all $d\in \allows(q)$:
  \begin{equation}
    \sem{c(q)} \modelsall \set{ (K,q') \in 2^\D\times C
      \mid
      \text{if }d \in K\text{ then }d\in\allows(q')\text{ and }\tick(q) <\tick(q')
    }
  \end{equation}
  \end{enumerate}
\end{definition}
The last item models the central criterion for correctness, namely that
$\allows$ is preserved by $c$. Here, we
have $\sem{c(q)} \in F(R,2^\D\times C)$, so we can apply
$\modelsall$ to a predicate on $2^\D\times C$. The condition means that the learner 
may update their own knowledge ($\allows$) according to the teacher's response:
assume the learner is in state $q\in C$ which allows $d\in \D$ and the teacher
gives a response which (a) is consistent with the set $K\subseteq \D$ of concepts and (b) which makes
the learner transition to state $q'$. Then this successor state $q'$ must still allow $d$ 
and must have an advanced step counter.
Note that the condition only models an inclusion: the learner is still allowed
to \emph{forget} previous information by increasing the $\allows$-predicate.

In the example of a query $c(q) = (H,r,f) \in \N\times R\times C^W$ in the number
guessing game, the preservation criterion instantiates to:
\[
  \text{for all }d\in \allows(q):
  \begin{array}{@{}l@{\,}l}
  \text{ if }d < H\text{ then }d \in \allows(f(\toohigh))&\text{ and }\tick(q) < \tick(f(\toohigh)) \\
  \text{ if }d > H\text{ then }d \in \allows(f(\toolow))&\text{ and }\tick(q) < \tick(f(\toolow))
  \end{array}
\]
In particular, if the learner knows that certain responses by the teacher are
inconsistent with what the learner has already inferred, then the condition
becomes vacuous and the learner may transition to an arbitrary
state:
\begin{itemize}
\item In the number guessing game, an example for such a response is \toohigh
when the learner was guessing \textqt{0}.

\item In DFA learning, Item 5 ensures that if the learner's hypothesis
automaton is consistent with all previous membership queries, then the
counterexample produced by an equivalence query must be different from all the
membership queried words before.
\end{itemize}
\begin{example}[\agdaref{DFA-Examples.Enumerator}{enumerator-correct.enumerator-step-bounded}{}]
  In DFA-Learning, let $e\colon \N\to \DFA$ and $i\colon \DFA\to \N$ be functions
  such that for every $M\in\DFA$, the automata $M$ and $e(i(M))$ accept the same
  language. Then the learner that asks equivalence queries $e(0), e(1), e(2),
  \ldots$ is step-bounded by $b(M) := 1+i(M)$.
\end{example}

\begin{theorem}[\agdaref{Learner-Correctness}{Correctness.learner-correct}{Proof details are in the module above}]\label{thm:learner-correct}
  If a learner is step-bounded by $b\colon \D\to \N$, then for every teacher
  and every $d\in \D$, the learner finds $d$ within $b(d)$ rounds.
\end{theorem}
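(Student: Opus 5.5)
We prove a strengthened invariant by induction on a natural number $k$:
\begin{quote}
for all $q\in C$, $s\in T$ and $d\in\allows(q)$, if $\tick(q)+k \ge b(d)$, then $d$ is not possible after $k$ rounds starting from $(q,s)$.
\end{quote}
The theorem then follows by taking $q=q_0$, $s=s_0$ and $k=b(d)$. This choice is legitimate because $\allows(q_0)=\D$ holds by item~3 of \autoref{defStepBounded}, and $\tick(q_0)+b(d)\ge b(d)$ holds trivially.

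\textbf{Base case $k=0$.} Here the hypothesis gives $\tick(q)\ge b(d)$. Item~4 says that $d\in\allows(q)$ implies $\tick(q)<b(d)$, so the hypothesis is contradictory and the claim holds vacuously. This is exactly why the invariant is phrased with the slack $\tick(q)+k$: at $k=0$, "possible after $0$ rounds" is always true, so the only way to discharge the case is to show it cannot occur.

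\textbf{Step $k+1$.} Suppose, for contradiction, that $d$ is possible after $k+1$ rounds from $(q,s)$. By definition, $\alpha(\sem{c(q)})(s)=(P,q',s')$ with $d\in P$, and $d$ is possible after $k$ rounds from $(q',s')$. We want to show that $d\in\allows(q')$ and $\tick(q')>\tick(q)$. Given these, $\tick(q')+k\ge\tick(q)+1+k\ge b(d)$, and the induction hypothesis applied at $(q',s')$ yields the contradiction.

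The remaining task, and the main obstacle, is to transport the modal statement of item~5 through the teacher's natural transformation. Item~5 gives $Z'\in F(R,S)$ with $F(R,\iota)(Z')=\sem{c(q)}$, where
\[
S=\set{(K,q'')\mid d\in K\Rightarrow d\in\allows(q'')\wedge\tick(q)<\tick(q'')}
\]
and $\iota\colon S\hookrightarrow 2^\D\times C$ is the inclusion. Naturality of $\alpha$ in $X$ yields
\[
\alpha_{R,2^\D\times C}(F(R,\iota)(Z'))=M_T(R,\iota)(\alpha_{R,S}(Z')).
\]
Evaluating both sides at $s$, the left side is $\alpha(\sem{c(q)})(s)=(P,q',s')$. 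The right side is $\alpha_{R,S}(Z')(s)$ with $\iota$ applied to its $X$-component. Since $M_T(R,\iota)$ acts by $\mathrm{id}_R+\iota\times\mathrm{id}_T$ under $(-)^T$, the result $(P,q',s')$ must come from some element $((P,q'),s')\in S\times T$. Hence $(P,q')\in S$, and since $d\in P$, we obtain exactly $d\in\allows(q')$ and $\tick(q)<\tick(q')$.

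In short, this argument is a "naturality implies the box modality is preserved by any teacher" lemma. I would isolate it as a separate lemma: if $Z\modelsall S$, then $\alpha(Z)(s)$ is either in $R$ or of the form $(x,s')$ with $x\in S$. The induction above then becomes routine bookkeeping. If one only has the Agda-style explicit characterization of $\modelsall$, the same lemma follows by the same naturality square, since that characterization is defined as membership in the image of $F(R,P\hookrightarrow X)$.
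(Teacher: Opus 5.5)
Your proof is correct and takes essentially the same route as the paper's proof, which the paper leaves to the Agda module: induction on the number of remaining rounds under the slack invariant $\tick(q)+k\ge b(d)$ for $d\in\allows(q)$, with the base case ruled out by item~4, plus a lemma that naturality of $\alpha$ transports the $\modelsall$-condition of item~5 through any teacher state. That lemma gives $d\in\allows(q')$ and $\tick(q)<\tick(q')$ in the inductive step, and it needs only naturality in $X$, consistent with the paper's remark that functoriality in $R$ is not needed in the formalization.
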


\section{Case Study: Natural Number Guessing}
We instantiate the above generic notions to obtain upper and lower bounds of
the natural number guessing game in a formalized setting:

\begin{example}[\agdaref{NatExamples.LogLearner}{log-learner-bound}{}]\label{exBinSearch}
  In natural number guessing, we refine the previous binary search
  learner (\autoref{exLogLearner}) as follows to prove that the binary search
  is step-bounded by $b(d) = 3 + 2\cdot \floor{\log_2(d)}$.
  To avoid an overly syntactic description, we identify the states with their
  interpretation as intervals and define:
  \[
    C := \set{\halfopen{0,\infty}}
    \cup \set{\halfopen{2^b,\infty}\mid b\in \N}
    \cup \set{\halfopen{b, b+2^e}\mid b\in \N, e\in \N, e\le \log_2(b)}
    \tag*{\agdarefcustom{\textbullet\ State space}{NatExamples.LogLearner}{LogLearner}{}}
  \]
  In the third disjunct, we phrase the inequality using $\log_2$ because this
  also covers the case $b=0$ (in Agda, $\log_2(0) = 0$ by definition). By this
  inequality, the learner knows in each state how many queries have been asked already:
  \[
    \tick(\halfopen{0,\infty}) = 0
    \quad
    \tick(\halfopen{2^b,\infty}) = 1 + b
    \quad
    \tick(\halfopen{b,b+2^e}) = 2 + 2 \cdot \floor{\log_2(b)} - e
    \tag*{\agdarefcustom{\textbullet\ \ensuremath{\tick} function}{NatExamples.LogLearner}{ticks}{}}
  \]
  If a state $q\in C$ allows $d\in \N$ (i.e.~$d\in \allows(q)$), then
  $\tick(q) < b(d)$ (\agdarefcustom{\textbullet\ \ensuremath{\tick <
  b}}{NatExamples.LogLearner}{in-time}{}).
  The main effort is then to prove that the map $c$ is compatible with $\tick$
  and $\allows$ (Item 5) \agdarefcustom{\textbullet\ $c$ compatibility}{NatExamples.LogLearner}{δ-preserves-∋}{}. Then one obtains:
  the binary search learner finds every $d\in \N$ within $3+2\cdot
  \floor{\log_2(d)}$ rounds.
\end{example}

Note that this upper bound on queries holds for all teachers, including the
adversarial teacher (\autoref{exAdversarial}) that made use of its internal
state space. How can we prove that binary search is indeed the fastest learning
algorithm?
When proving such a lower bound in a formalized setting, one needs to be
careful to put the quantifiers in the right order, because otherwise, one
obtains statements like the following:
\begin{proposition}[\agdaref{NatExamples.GuessingLearner}{theorems.learner-finds-every-secret}{}]\label{win2rounds}
  Fix any stateless teacher 
  that answers something different from $\toolow$ for at least one query.
  Then there is a learner that finds every secret $d\in \N$ within $2$ rounds.
\end{proposition}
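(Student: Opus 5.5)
The plan is to exploit the quantifier order: the learner may be chosen \emph{after} the teacher, so it can be hard-wired with knowledge of the teacher's answers. Since the teacher is stateless, it is essentially a single map $\delta\colon \N \to \set{\correct} + W$ (the state component being trivial). By hypothesis there is some $n_0\in\N$ with $\delta(n_0)\neq\toolow$. By bounded search below $n_0$ (decidable, since we simply evaluate $\delta$ on $0,\ldots,n_0$), we pick the \emph{least} $m\le n_0$ with $\delta(m)\neq\toolow$. Minimality gives $\delta(k)=\toolow$ for all $k<m$; in particular $\delta(m-1)=\toolow$ whenever $m>0$.

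The learner is then a three-state Moore automaton, say $C=\set{q_0,q_1,q_2}$ with $R=1$. In $q_0$ it guesses $m$ and continues in $q_1$ on both wrong answers. In $q_1$ it guesses $m-1$ if $m>0$ and $0$ otherwise, continuing in $q_2$. From $q_2$ it may do anything. We then unfold the definition of \emph{still possible} for $2$ rounds from $(q_0,s_0)$ and show that it fails for every $d$, by case distinction on $\delta(m)$.

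\textbf{Case $\delta(m)=\correct$.} Then $\alpha(\sem{c(q_0)})(s_0)$ lands in $R$, so condition~2 fails already in the first round and no $d$ is possible after $2$ rounds.

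\textbf{Case $\delta(m)=\toohigh$.} The first response carries the predicate $\set{d\mid m>d}$, so any still-possible $d$ must satisfy $d<m$.
\begin{itemize}
\item If $m=0$, this predicate is empty and we are done.
\item If $m>0$, the second round (the teacher is stateless, so its answer depends only on the guess) yields $\delta(m-1)=\toolow$ with predicate $\set{d\mid m-1<d}$. Together with $d<m$, this is contradictory, so again no $d$ survives both rounds.
\end{itemize}
Hence the learner finds every $d$ within $2$ rounds.

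The main obstacle is bookkeeping rather than mathematics. First, the generic teacher $\alpha$ must be translated to the concrete $\delta$ via the correspondence established for $F(R,X)=\N\times R\times X^W$, and one needs that naturality makes $\alpha(\sem{c(q)})(s)$ agree with $\delta$ applied to the guess, tagged with the predicates of $\sem{-}$. Second, the minimal $m$ must be constructed concretely, for instance by a recursive search from $0$ up to $n_0$ that terminates by the witness $n_0$. I would first prove a small lemma computing the first two game steps explicitly and then discharge the arithmetic contradiction $m-1<d<m$.
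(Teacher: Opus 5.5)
Your proposal is correct and takes essentially the paper's approach. The paper's justification is that, since the learner is chosen after the teacher, it can guess the teacher's \emph{secret} on the first try; this is your least $m$ with a non-\toolow{} answer. If the teacher answers \toohigh{} instead, a second query convicts it of inconsistency, which is your guess of $m-1$ (or the empty predicate when $m=0$). Your case analysis on the definition of \emph{still possible} for two rounds is sound.
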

Despite being counter-intuitive, the statement holds because fixing the teacher upfront
means that among all the many learners there is one that just luckily guesses the
teacher's secret directly on the first try. The bound in the proposition is 2
because it may need a second query to convict the teacher of inconsistency (cf.\
\autoref{tab:contra}).

\begin{theorem}[Lower bound, \agdaref{NatExamples.AdversarialTeacher}{thm-adversarial-teacher}{The Agda standard library has the convention $\log_2(0)=0$}]\label{thmNatLower}
  In the natural number guessing game, if the adversarial teacher
  (\autoref{exAdversarial}) is initialized with an interval $[n,n+m]$ (for $n,m\in \N$),
  then for every learner, there is a natural number $d\in \N$ such that $d$ is still
  possible after $\floor{\log_2(m)}$ rounds.
\end{theorem}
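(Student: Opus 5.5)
Since the adversarial teacher's state is an interval $[n',m']$ of still-consistent secrets, we prove a stronger invariant by induction on the number of rounds $k$: for every learner state $q$ and every teacher state $[n',m']$ with $2^k \le 1+(m'-n')$, there is $d\in[n',m']$ that is still possible after $k$ rounds starting from $(q,[n',m'])$. The theorem follows with $k:=\floor{\log_2(m)}$, $n'=n$, $m'=n+m$, because $2^{\floor{\log_2 m}}\le m < 1+m$. For $m=0$, with the convention $\log_2(0)=0$, we have $k=0$ and the claim is trivial. This matches the formalization's state space, which carries such a counter $c$ with $2^c\le 1+(m-n)$. So the plan is to do the induction on $c$ and generalize over the learner state $q$.

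\textbf{Base case $k=0$.} By definition, every $d$ is possible after $0$ rounds, so pick $d:=n'$.

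\textbf{Step $k+1$.} Here $2^{k+1}\le 1+(m'-n')$, so the interval has at least two elements and $n'<m'$. Hence the \correct{} case of $\delta$ (which requires $n'=k'=m'$) is impossible for the learner's guess $H$, the first component of $c(q)$. The game therefore continues with $\alpha(\sem{c(q)})([n',m']) = (P,q',s')$, where $P$ is given by the semantics: $\set{d\mid H<d}$ in the \toolow{} case and $\set{d\mid H>d}$ in the \toohigh{} case. We then split on the two cases of $\delta$.
\begin{itemize}
\item If $H\le\floor{\frac{n'+m'}{2}}$, then $s'=[\max(H+1,n'),m']$. Every element of $s'$ is greater than $H$, so it lies in $P$.
\item Otherwise $s'=[n',\min(H-1,m')]$, and every element of $s'$ is smaller than $H$.
\end{itemize}
In both cases we need $s'$ to be a nonempty interval with $2^k\le 1+\text{size}(s')$. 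This is the arithmetic heart. In the first case $1+\text{size}(s')\ge m'-\floor{\frac{n'+m'}{2}} = \ceil{\frac{m'-n'}{2}}$, which is at least $\frac{(1+m'-n')-1}{2}$. Since $2^{k+1}\le 1+(m'-n')$, we want $2^k\le \ceil{\frac{m'-n'}{2}}$. This holds: $m'-n'\ge 2^{k+1}-1$ gives $\ceil{\frac{m'-n'}{2}}\ge 2^k$. In the second case $H>\floor{\frac{n'+m'}{2}}$, so $1+\text{size}(s')\ge \floor{\frac{n'+m'}{2}}-n'+1 = \floor{\frac{m'-n'}{2}}+1\ge 2^k$, using $m'-n'\ge 2^{k+1}-1$ again. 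We also have to handle guesses $H$ outside $[n',m']$ via the $\max$ and $\min$, where the interval stays at least as large. The induction hypothesis applied to $(q',s')$ then yields $d\in s'$ possible after $k$ rounds. Since $d\in s'\subseteq P$, $d$ is possible after $k+1$ rounds from $(q,[n',m'])$.

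\textbf{Main obstacle.} We expect the main difficulty to be the floor and ceiling arithmetic showing that the surviving half still satisfies $2^k\le 1+\text{size}$, together with the bookkeeping that $d\in s'$ implies $d\in P$. The latter needs the interval description of the teacher's state to be kept as an invariant, which is why the formal teacher stores it. We would first prove a standalone lemma: for $n'\le m'$ with $2^{k+1}\le 1+m'-n'$, both halves $[\floor{\frac{n'+m'}{2}}+1,m']$ and $[n',\floor{\frac{n'+m'}{2}}]$ have size at least $2^k$. Then we would reduce the general $H$ case to this lemma by monotonicity.
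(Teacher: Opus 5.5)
Your proposal is correct and matches the formalized argument: induction on the number of rounds, generalized over the learner's state, with the invariant $2^c\le 1+(m-n)$ on the teacher's interval and the halving arithmetic as the key step. The differences are cosmetic. The formalization stores the counter $c$ inside the adversarial teacher's state and treats guesses outside the interval as a separate case, whereas you keep $c$ at the proof level and absorb out-of-range guesses through the $\max$/$\min$ bounds.
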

So even if learner and teacher agree on a finite interval of size $m$ in
advance, the teacher can still force the learning game to take $\log_2(m)$ steps.
\begin{example}[\agdaref{NatExamples.AdversarialTeacher}{example-adversarial-100}{}]
  In a television show in 2016, Steve Ballmer of Microsoft presents a job interview
  question: 
  \begin{quote}\itshape 
    I'm thinking of a number between 1 and 100. You can guess. After each
    guess, I will tell you whether you're \textqt{high} or \textqt{low}.
    If you get it the first guess, I give you 5 bucks, [otherwise], 4 bucks, 3,
    2, 1, 0, you pay me a buck, you pay me 2, you pay me 3. The question is: do
    you want to play or not?
    \hfill{\upshape \url{https://youtu.be/svCYbkS0Sjk}}
  \end{quote}
  In the video, the TV show host needs 7 guesses, and indeed by
  $\floor{\log_2(m)}=6$, \autoref{thmNatLower} shows that it is not
  possible to win the game in fewer than 7 guesses if we have no guarantee that
  the teacher has changed their mind (i.e.~updated their internal state) during
  the game.
  If on the other hand, the teacher chooses a number uniformly random and sticks
  to it, the expected revenue is positive at $\$0.20$~(cf.~John
  Graham-Cumming's blog post~\cite{jgc}).
\end{example}

\section{Case Study: Automata Learning}

Often, the notion of the automata learning game needs to be
adjusted slightly to new surrounding conditions, e.g.\ to different models like
Mealy and Moore automata and to different kinds of queries. These adjustments
can easily be covered by the above general definitions because it suffices to
define an adjusted game type and its semantics.

\subsection{Size of Counterexamples}
\label{secCeSize}
If we do not assume that the teacher provides counterexamples of shortest possible length,
then the run-time of common learning algorithms also needs to be parametric in
the length of the longest counterexample by the teacher.
However, this maximum length must not be passed to the learning algorithm as a
parameter, since it would simplify the learning problem.
Instead,
we incorporate
the length by $\D := \DFA\times \N$ and adjust the semantics of queries as:
\[
  \sem{\EQ(H,r,f)} = \EQ(H,r,w\mapsto (\set{(M,\ell)\mid L(M)(w) \neq L(H)(w)\text{ and } |w|\le \ell},~ f(w)))
  \tag*{\agdarefcustom{Counterexample Size}{DFA-CE-Size-Learning}{DFA-CE-Sem.⟦F\_⟧}{}}
\]
Then, the bound $\D\to \N$ can make use both of the size of DFAs and the maximum counterexample length, e.g.\ when verifying the run-time of the binary-search based counterexample processing by Rivest \&
Schapire \cite{RivestS89,RivestS93}. At the same time, the learner does not
know this maximum length during the learning game.

\subsection{Lack of Counterexamples}
Consider the automata learning game in which the teacher does not provide a
concrete counterexample on wrong equivalence queries but simply a \textqt{no}.
In this so-called \emph{restricted} automata learning~\cite[Sec.\
3.2]{Angluin87ML} one can show that there is no learning algorithm that uses
only a polynomial number of queries.

The game type is:
\[
  F(R,X) := A^*\times X^2 + \DFA \times R \times X
\]
The left summand is identical to the membership queries $\MQ$ in normal DFA learning.
In the equivalence query $\EQ(H,r,q')$, the learner only provides a single successor state $q'$,
describing how to proceed if the hypothesis $H$ was wrong:
\[
  \sem{\EQ(H,r,q')} = \EQ(H,r,(\set{M\in \DFA\mid \exists w\in A^*\colon L(H)(w)\neq L(M)(w)},\,q'))
  \tag*{\agdarefcustom{Restricted EQ Semantics}{DFA-Restricted-Learning}{DFASem.⟦F\_⟧}{}}
\]
Likewise, the generic notion of teacher instantiates to two maps of the form
$T\times A^*\to 2\times T$ and $\DFA\times T\to 1 + T$ (\agdarefcustom{Restricted DFA Teacher}{DFA-Restricted-Learning}{DFATeacher}{}). We can prove formally that there is no learner capable of learning
DFAs in this restricted setting within a polynomial number of queries:
\begin{proposition}[\agdaref{DFA-Examples.DFA-Restricted-Bound}{thm-adversarial-teacher}{}, also sketched in {\cite[Sec.\
3.2]{Angluin87ML}}]\label{thmNoPolyLearn}
  For every input alphabet $k:=|A|$ of size at least $k\ge 2$ and every polynomial $P\colon \N\to \N$, there is a teacher such that
  for every learner some DFA $(Q,q_0,o,\delta)$ is still possible after
  $P(|Q|)$ queries by the learner.
\end{proposition}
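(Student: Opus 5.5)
We construct an adversarial teacher in the style of \autoref{exAdversarial}, but for restricted DFA learning. It keeps a finite set of candidate DFAs and answers so that at most one candidate is removed per query. Concretely, we fix a length $\ell$ and, for every word $u\in A^\ell$, the DFA $M_u$ accepting exactly $\{u\}$. Each $M_u$ has $\ell+2$ states: a chain of $\ell+1$ states plus a sink. The teacher's state space is $T := \Pow(A^\ell)$, the set of words $u$ for which $M_u$ is still a candidate, with initial state $s_0 := A^\ell$.

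The teacher answers as follows:
\begin{itemize}
\item A membership query for $w$ is answered with $0$. If $w\in A^\ell$, the teacher moves to $S\setminus\{w\}$; otherwise its state is unchanged.
\item An equivalence query for a hypothesis $H$ is answered with \textqt{no}. If $L(H)=\{u\}$ for some $u$, the teacher moves to $S\setminus\{u\}$; otherwise the state is unchanged. To keep the teacher total, it only accepts (returns the element of $1$) once $S$ has become empty.
\end{itemize}
This is a map $T\times A^*\to 2\times T$ together with a map $\DFA\times T\to 1+T$, so by the conversion for the restricted teacher it is a generic teacher.

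The key invariant, proved by induction on the number of rounds $n$ along the game $g$, is: if the teacher is in state $S$ after $n$ rounds, then every $M_u$ with $u\in S$ is still possible after those $n$ rounds, and $|S|\ge k^\ell - n$. For the induction step we unfold the definition of still possible and use the semantics. A \textqt{0} answer to $\MQ(w)$ is consistent with $M_u$ whenever $u\ne w$. A \textqt{no} answer to $\EQ(H)$ is consistent with every $M_u$ with $L(M_u)\ne L(H)$, which excludes at most the single $u$ with $L(H)=\{u\}$.

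For the parameter choice, we take $\ell$ with $k^\ell > P(\ell+2)$. Such an $\ell$ exists because $k\ge 2$ makes $k^\ell\ge 2^\ell$ grow faster than any polynomial. Then after $P(\ell+2)$ rounds we have $|S|\ge 1$, so some $M_u$ with $|Q|=\ell+2$ is still possible after $P(|Q|)$ rounds. If the game ended earlier, then at that point $S=\emptyset$, which contradicts $|S|\ge k^\ell-n>0$. Note that the quantifiers are in the order the statement requires: the teacher depends only on $P$ (through $\ell$), not on the learner.

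I expect the main obstacle to be the formal side rather than the idea.
\begin{itemize}
\item We need to decide, from an arbitrary $H\in\DFA$, whether $L(H)$ is a singleton $\{u\}$ with $u\in A^\ell$. The way around this is to test the finitely many $M_u$, $u\in S$, for language equivalence with $H$, which is decidable for DFAs.
\item We need an explicit lower bound $2^\ell>P(\ell+2)$ for polynomials. I would first try bounding $P$ by $c\cdot(x+1)^d$ and picking $\ell$ large enough.
\end{itemize}
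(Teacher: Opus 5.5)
Your proposal is correct and is essentially the paper's argument: a teacher that always answers ``no'', tracks words of length $\ell$ with $P(\ell+2)<k^\ell$ (the paper records the discarded words rather than the remaining ones), loses at most one singleton language $\{u\}$ per query, and so leaves some $(\ell+2)$-state DFA for $\{u\}$ still possible after $P(\ell+2)$ rounds. The paper is simpler in two places. On equivalence queries it does not decide whether $L(H)=\{u\}$ via DFA equivalence; it just discards some $v\in A^\ell$ accepted by $H$ if one exists, which is decidable by running $H$ on the finitely many words of $A^\ell$. And its teacher never needs to accept, since always answering ``no'' is already a total map into $1+T$.
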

\begin{proof}[Proofsketch]
  Given a polynomial $P\colon \N\to\N$, we first construct an exponent $n$ such that:
  \[
    P(2 + n) < k ^ n.
  \]
  This $n$ will be used to construct a word $w\in A^n$ in the end for which
  $L= \set{w}$ will be the witnessing language that the learner has failed to
  learned within $P(2+n)$ queries.
  Here, the query count passes $2+n$ to $P$ because the language $L=\set{w}$ is
  accepted by a DFA with $2+n$ states.

  During learning, the adversarial teacher keeps track of a subset of words
  $T\subseteq A^n$, initially $T=\emptyset$.
  \begin{enumerate}
  \item On a membership query for $v\in A^*$, the teacher replies \textqt{no}
    and sets $T:= T\cup \set{v}$ if $|v| = n$ (otherwise, keep $T$ unchanged).
  \item On an equivalence query for an automaton $H \in\DFA$, the teacher
    checks whether there is some word $v\in A^n$ accepted by $H$. If so,
      put $T :=T\cup \set{v}$ (otherwise, keep $T$ unchanged).
  \end{enumerate}
  After $P(2+n)$ many queries, there is by $P(2 + n) < k ^ n$ some word
  \[
    w \in A^n \setminus T
  \]
  The language $L=\set{w}$ is consistent with all the previous answers by the
  teacher and so not learned in-time by the learner.
\end{proof}

\subsection{Mealy Machines -- Version 1}
Learning algorithms often work with Mealy machines
instead of DFAs (e.g.~$\lsharp$ \cite{VaandragerGRW22}).
A Mealy machine for the input alphabet $A$ and output alphabet $O$ consists of
the data:
\[
  \text{a finite set }Q
  \qquad
  q_0 \in Q
  \qquad
  \delta\colon Q\times A\to O\times Q.
  \tag*{\agdarefcustom{Definition of Mealy machine}{Mealy}{Mealy-Machine}{}}
\]
Here, we can consider different ways to define the teacher's answer to
a membership query (also called \emph{output query}) for an input word $w\in
A^*$:
\begin{enumerate}
\item Version 1: The teacher returns the \emph{last} output symbol to the learner.

  This is analogous to the membership query in DFAs where the teacher only
  responds with the
  acceptance conditions of the last state reached via the input word $w\in
  A^*$, without reporting the acceptance of the intermediate states.

\item Version 2: The teacher returns the outputs of all transitions taken, so
  the teacher returns an output word $v \in O^*$ of the same length as the
  input $w\in A^*$.

\end{enumerate}
The $\lsharp$ learning algorithm works with Version 2 and takes $\CO(k\cdot n^2 +
n\cdot \log(m))$ queries to learn a Mealy machine with $n$ states and
counterexamples of maximum length $m$~\cite[Theorem 3.14]{VaandragerGRW22}.

Does the choice of membership query affect the bound on queries? One may wonder
whether $\lsharp$ saved any queries by using the richer version of membership query.
In order to find out, we first study Version 1 of the membership query, where
only the last output symbol is returned.

The game type is:
\[
  F(R,X) = A^+\times X^{O} + \Mealy \times R \times X^{A^+}
  \tag*{\agdarefcustom{Game type for Mealy machines}{Mealy-Last-CE-Size-Learning}{F}{}}
\]
So the teacher notion instantiates to the following:
\[
  \MQ\colon T \times A^+\to O \times T
  \qquad
  \EQ\colon T\times \Mealy\to 1 + A^+\times T
  \tag*{\agdarefcustom{Teacher for Mealy machines}{Mealy-Last-CE-Size-Learning}{MealyTeacher}{Conversion functions to generic teacher are at the end of the file.}}
\]
Its semantics is defined analogously to DFAs with the bound on counterexample size (\autoref{secCeSize}).

For this stricter setting, we implement the $\lsharp$ algorithm of
Vaandrager et al.~\cite{VaandragerGRW22} in Agda and formally prove its query
bound. We first recall the algorithm and then explain how it instantiates
the proof principle of \autoref{thm:learner-correct}.

\paragraph{The $\lsharp$ algorithm}
The central data structure of $\lsharp$ is an observation tree, which
represents a partial map $\mathcal{T}\colon A^+\partialto O$ collecting the results of
all membership queries so far. $\lsharp$ maintains a prefix-closed set
$S\subseteq A^*$ of words, called the
\emph{basis}, which are (the paths to) the states in the hidden Mealy machine
that have been identified as being distinct.
Initially, $S = \set{\varepsilon}$ is only the empty word and the basis grows whenever
$\lsharp$ discovers a new state whose behaviour is distinct from any of the
existing basis states.
Thus, any two basis states $b_1,b_2\in S$ are \emph{apart}, meaning that some $w\in A^+$ has been
observed for which $b_1\cdot w$ and $b_2\cdot w$ produce different outputs.
Apartness is also how frontier words
are classified: for a frontier word $u\in S\cdot A$, the learner keeps track of
the \emph{candidate} basis words that are not (yet) apart from $u$
(\agdaref{L-Sharp.State}{Semantic-State}{The candidates of the frontier word
$u = b\cdot i$ are the list $\delta\,b\,i$.}).
The learner then acts by the first applicable rule:
\begin{itemize}
\item[(R1)] If some frontier word has \emph{no} candidates, it witnesses a new
  state: promote it to the basis
  (\agdarefcustom{Rule R1}{L-Sharp.SaturateBasis}{saturate-basis}{}).
  This is the only rule that requires no query.
\item[(R2)] If the output of some frontier word $u$ has not been observed
  yet, pose the output query $u$.
\item[(R3)] If some frontier word $u$ still has two distinct candidates
  $b_1,b_2\in S$, query $u\cdot \eta$, where $\eta$ is the suffix witnessing
  the apartness of $b_1$ and $b_2$. By weak co-transitivity of
  apartness~\cite{VaandragerGRW22}, the answer makes $u$ apart from $b_1$ or
  from $b_2$, so at least one candidate is eliminated.
\item[(R4)] Otherwise every frontier word has a unique candidate, and the
  learner builds the evident hypothesis $\mathcal{H}$ on the state set $S$
  and poses it as an equivalence query
  (\agdarefcustom{Rule R4}{L-Sharp.Hypothesis}{hypothesis}{}).
\end{itemize}
In order to find the first matching rule, the we have implemented a do-notation for decision procedures
that already take care of the duality between $\forall$ and $\exists$:
\begin{itemize}
  \item[(R1)] \texttt{IsolatedState?} (\agdarefcustom{Guard R1}{L-Sharp.State}{Semantic-State.IsolatedState?}{})
  decides between
  $\exists b \in S, i \in A\colon \delta(b,i) = \emptyset$
  and $\forall b\in S, i \in A\colon \exists b' \in \delta(b,i)$.
  Here, $\delta(b,i)$ is the list of base states that are not (yet) apart from
  the frontier state $b\cdot i$.

  \item[(R2)] \texttt{LackingFrontier?} (\agdarefcustom{Guard R2}{L-Sharp.State}{Semantic-State.LackingFrontier?}{}):
  decides between
  $\exists b, i\colon \mathcal{T}(b\dot i)\text{ is undefined}$
  and $\forall b, i\colon \mathcal{T}(b\dot i)\text{ is defined}$

  \item[(R3)]
  \texttt{AmbiguousFrontier?} (\agdarefcustom{Guard R3}{L-Sharp.State}{Semantic-State.AmbiguousFrontier?}{}):
  decides between
  $\exists b, i, b_1, b_2: b_1,b_2\in \delta(b,i), b_1\neq b_2$
  and 
  $\forall b, i : |\delta(b,i)| \le 1$

  \item[(R4)] The right-hand disjuncts of (R1), (R2), (R3) yield all the
  sufficient conditions to construct a hypothesis.
\end{itemize}

A counterexample $\rho\in A^+$ to $\mathcal{H}$ is processed by binary
search, in the style of Rivest and Schapire~\cite{RivestS89,RivestS93}: the
learner maintains a word $b\cdot i\cdot\sigma$ (with $b\in S$, $i\in A$)
that is apart from the basis word the hypothesis assigns to it, and each
output query halves the length of the suffix $\sigma$
(\agdarefcustom{Counterexample processing}{L-Sharp.ProcessCE}{Proc-Step}{}).
When $\sigma$ is exhausted, the conflict has been pushed to the frontier
word $b\cdot i$ itself: its last remaining candidate is eliminated, so rule
(R1) fires and the basis grows.

\paragraph{$\lsharp$ as a coalgebra}
In our framework, the learner is a coalgebra $c\colon C\to F(R,C)$ for the
Version~1 game type. The state set
$C$~(\agdarefcustom{State set $C$}{L-Sharp.State}{L\#-State}{In the formalization, the input
alphabet $A$ is named \texttt{I}.}) consists of the observation
tree, basis, and bookkeeping for the phase the learner is in: the main loop
(R1--R4), the two counterexample-processing phases, and a designated
\emph{stuck} state that the learner enters when a teacher response is
inconsistent with \emph{every} Mealy machine -- the game type demands a
continuation for every response, even impossible ones. As the result type we
simply pick $R := \Mealy$. The transition structure
$c$~(\agdarefcustom{Coalgebra $c$}{L-Sharp.Algorithm}{L\#-δ}{}) then selects the first applicable rule.

\begin{theorem}[\agdaref{Main-Results}{L\#.L\#-time-bound%
              }{}]\label{thmLsharp}
  If membership queries return only the last output symbol (Version 1),
  then for $k = |A|$ input symbols,
  the $\lsharp$ learning algorithm learns a Mealy machine with $n$ states
  provided with counterexamples of length at most $m$ by the teacher within
  \[
    (k + 1) \cdot n \cdot (n+1)
    + (n + 1) \cdot \lceil \log_2 m \rceil + 1
  \]
  queries.
\end{theorem}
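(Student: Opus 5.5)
We prove the bound by instantiating \autoref{thm:learner-correct}. Take $\D := \Mealy\times\N$, with the counterexample-length semantics of \autoref{secCeSize}, and let
\[
b(M,m) := (k+1)\cdot n\cdot(n+1) + (n+1)\cdot\ceil{\log_2 m} + 1,
\]
where $n$ is the number of states of $M$. It then suffices to show that the $\lsharp$ coalgebra $c$ is step-bounded by $b$, i.e.\ to construct $\tick$ and $\allows$ satisfying Items 3--5 of \autoref{defStepBounded}.

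\textbf{Defining $\allows$.} We let $\allows(q)$ be the set of pairs $(M,m)$ such that:
\begin{itemize}
\item $M$ agrees with the observation tree $\mathcal{T}$ of $q$ on all recorded outputs;
\item in the counterexample phases, the word $b\cdot i\cdot\sigma$ being processed is really apart in $M$ from the hypothesis' basis word, and $|\sigma|$ is at most the current halving bound derived from $m$;
\item in the stuck state, $\allows$ is empty.
\end{itemize}
Initially $\mathcal{T}$ is empty, so $\allows(q_0)=\D$. The key semantic invariant, used for Item 4, is that for $(M,m)\in\allows(q)$ the basis words are pairwise apart in $M$ and hence reach distinct states of $M$. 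This gives $|S|\le n$, and in the stuck case there is nothing to check.

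\textbf{Defining $\tick$.} We do not count queries directly. Instead, $\tick$ is a potential that strictly increases on every query and is bounded via $|S|$ and the candidate sets. Set
\[
\tick(q) := (k+1)\cdot s\cdot(s+1)\;-\;\bigl(\text{unresolved frontier work}\bigr)\;+\;(\text{CE phase count})\cdot\ceil{\log_2 m'} + \ldots,
\]
where $s=|S|$. The frontier work has two parts: for each of the $s\cdot k$ frontier words, one unit if its output is still unobserved, plus its number of surviving candidates minus one. Each (R2) or (R3) query reduces this work by at least one; for (R3) this is exactly the weak co-transitivity argument. Since each frontier word has at most $s$ candidates, the work per basis level is at most $k\cdot s+k\cdot s\cdot(s-1)$, which sums to at most $(k+1)\cdot n\cdot(n+1)$ over growth to $n$.

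The equivalence queries and the binary-search steps contribute at most $(n+1)\cdot(1+\ceil{\log_2 m})$ in total, because each counterexample round ends with a basis growth through (R1) and there are at most $n$ growths plus one final successful EQ. Two further points need care. First, the length bound $m$ is not known to the learner, so $\tick$ must use the actual $|\sigma|$ and the count $\ceil{\log_2|\sigma|}$; Item 4 then compares this against $m$ using $|\rho|\le m$, which is guaranteed by the semantics. Second, (R1) asks no query, so it must be absorbed into the neighbouring query steps rather than given its own tick.

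\textbf{Checking Item 5.} This is verified rule by rule.
\begin{itemize}
\item (R2)/(R3): adding a consistent observation keeps $(M,m)$ allowed, and the potential rises.
\item (R4): a counterexample $w$ consistent with $(M,m)$ satisfies $L(M)(w)\neq L(\mathcal{H})(w)$ and $|w|\le m$, which establishes the CE-phase invariant. A correct hypothesis ends the game.
\item Binary-search step: the answer selects the half whose conflict is preserved in $M$, and $|\sigma|$ halves.
\item Responses inconsistent with every Mealy machine lead to the stuck state, where the obligation is vacuous.
\end{itemize}

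\textbf{Main obstacle.} I expect the hardest part to be designing $\tick$ so that it is simultaneously monotone across all phases, including the query-free (R1) promotions that reset frontier candidate sets, and still below $b(M,m)$ under the invariants. The first attempt would be a lexicographic-style encoding that charges the full frontier budget $(k+1)\cdot(s+1)$ for the new level at the moment $s$ increases. Promotion then never decreases the potential, and the remaining inequality is a routine arithmetic check.
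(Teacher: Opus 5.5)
\textbf{Gap 1: the potential fails Item~4.} Your outline matches the paper's: $\D=\Mealy\times\N$, the same $b$, \autoref{thm:learner-correct}, $\allows$ built from consistency with the tree plus the pending refutation, $\tick$ as a structural potential, and (R1) folded into the query steps. The problem is in the numbers, so the last step is not routine arithmetic. Your two budgets already add up to $(k+1)n(n+1)+(n+1)(1+\lceil\log_2 m\rceil)=b+n$. With frontier term $(k+1)s(s+1)$, a main-loop state with $s=n$ and no outstanding frontier work has potential $(k+1)n(n+1)$. That leaves only $(n+1)\lceil\log_2 m\rceil$ for the completed counterexample rounds. If each of $n-1$ rounds is charged one EQ plus $\lceil\log_2 m\rceil$ bisection steps, you get $\tick\ge b$ as soon as $n-1>2\lceil\log_2 m\rceil$, and Item~4 must hold at every state of $C$.

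Your justification of the frontier budget is also wrong as arithmetic: the per-level bound $ks^2$, summed over levels, is cubic. The correct count is global: each of the at most $kn$ frontier words is output-queried once, and each frontier/basis pair is separated at most once, giving $kn(n+1)$ in total. The variant in your last paragraph, charging $(k+1)(s+1)$ per new level, fails in the other direction. One promotion can add roughly $2ks$ units of work ($k$ new frontier words with $s+1$ candidates each, plus one new candidate for each old frontier word), so that potential can decrease when $k\ge 3$. A frontier budget of $k\cdot s(s+1)$ is monotone and leaves the $n(n+1)$ of slack needed for the equivalence queries.

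\textbf{Gap 2: $\allows$ forgets the counterexample lengths.} Once a counterexample phase ends, your $\allows$ no longer relates $m$ to the counterexamples, yet the charges for completed rounds stay in $\tick$. In the main loop every $(M,m)$ consistent with the tree is allowed, including $m$ far below the lengths already processed, so those charges cannot be bounded by $b(M,m)$. A stored round counter would not be bounded by $\allows$ at all.

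\textbf{How the paper avoids both.} Its $\allows$ keeps ``$m'$ is at least the length of every counterexample received so far'', which is preserved because the semantics gives $|w|\le m'$. Its $\tick$ counts accomplished work:
\[
\textstyle\sum_{j\le n_q}j+\#\text{observed frontier outputs}+\#\text{eliminated candidates}+n_q\lceil 1+\log_2 m_q\rceil,
\]
where $m_q$ is the longest counterexample so far, plus the bisection count inside counterexample phases. On a promotion, the increase $n_q+1$ of the first summand offsets the promoted word's observed output and its $n_q$ eliminated candidates. The increase of the last summand pays for the finished bisection. The bound then uses $n_q\le|M|$ and $m_q\le m'$. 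In counterexample phases it also uses the lemma that every machine refuting the current hypothesis has more than $n_q$ states. That lemma is a state-local replacement for your trace-level count of basis growths, which the step-bounded framework does not let you use directly.
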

\begin{proof}[Proof sketch]
As for DFAs (\autoref{secCeSize}),
we take $\D := \Mealy\times \N$, pairing the hidden machine with a bound on the
length of the teacher's counterexamples.
When writing $|M|\in\N$ for the number of states of a Mealy machine $M\in\Mealy$,
we define the bound function by:
\[
  b\colon \D\to \N
  \qquad
  b(M,m) =
    (k + 1) \cdot |M| \cdot (|M|+1)
    + (|M| + 1) \cdot \lceil \log_2 m \rceil + 1
\]
To apply the proof principle \autoref{thm:learner-correct}, we equip the learner with
$\allows$ and $\tick$ (\autoref{defStepBounded}):
\begin{itemize}
\item $\allows(q)\subseteq \D$ contains those $(M,m')$ where $M$ agrees
  with every output recorded in the observation tree of $q$, and $m'$ is at
  least the length of every counterexample received so far
  (\agdarefcustom{\textbullet\ $\allows$ for $\lsharp$}{L-Sharp.State}{L\#-allows}{}). In the
  counterexample-processing phases, $M$ must moreover refute the pending
  hypothesis on the received counterexample.
\item $\tick(q)$ is not a stored step counter but a structural norm of
  the state (\agdarefcustom{\textbullet\ $\tick$ for $\lsharp$}{L-Sharp.State}{Semantic-State.ticks}{}): for $|S| = n_q$,
  it is the sum of
  \[
    \qquad
    \underbrace{\textstyle\sum_{j\le n_q} j}_{\text{(a) basis growth}}
    +
    \underbrace{\#\text{observed frontier outputs}}_{\text{(b) rule R2}}
    +
    \underbrace{\#\text{eliminated candidates}}_{\text{(c) rule R3}}
    +
    \underbrace{n_q\cdot\ceil{1+\log_2 m_q}}_{\text{(d) counterexamples}}
  \]
  where $m_q$ is the length of the longest counterexample received so far.
  Intuitively, the above sum provides the maximum number of queries needed to
  establish all the information we have at the moment:
  \begin{enumerate}
  \item[(a)] for each (unordered)
  pair of basis states, we need one observation that proves the basis states apart.
  \item[(b)] for each basis state $b\in S$ and input $i\in A$, we have observed the output of
    $b\cdot i$.
  \item[(c)] for each frontier state $u\in (S\cdot A)$ and apart basis state $b\in S$,
    we have observed the distinguishing output.
  \item[(d)] each basis state may have been the result of counterexample
    processing, which took $\log_2 m_q$ many output queries.
  \end{enumerate}
\end{itemize}
The learner stays in-time because every query increases the $\tick$ value.
In the main loop, the $\tick$ value is the above norm: an
(R2)-answer defines a new frontier output (b), and an (R3)-answer
eliminates a candidate (c).
During counterexample processing, the queries do not increase any
of the summands; instead, the $\tick$ value of the counterexample phases
additionally counts the number of bisection steps.
When counterexample processing finishes, the number of basis states increases,
which makes the above sum increase by at least
$(n_q+1) + \ceil{1+\log_2 m_q}$, namely
$n_q+1$ in summand (a) and $\ceil{1+\log_2 m_q}$ in summand
(d).

It remains to check $\tick(q) < b(M,m')$ for all $(M,m')\in\allows(q)$,
where $b$ is the bound function defined above
(\agdarefcustom{\textbullet\ Def.\ $b(M,m')$}{L-Sharp.Correctness}{bound}{The formalization proves the invariant
for a slightly finer bound function and relaxes it to the closed form stated
in the theorem.}). Since all basis words are pairwise apart,
they reach pairwise distinct states in any machine consistent with the
observations, so $n_q\le |M|$ for every allowed $(M,m')$.
This bounds each of the four summands, e.g.\ (b) and (c) by $k\cdot |M|$ and
$k\cdot {|M|}^2$. For counterexample processing one proves that every machine that
refutes the current hypothesis must have \emph{strictly more} than $n_q$
states.
Hence, the $\lsharp$ learner is step-bounded by $b$ and so
\autoref{thm:learner-correct} yields the desired bound $b$ on queries.
\end{proof}

So the query complexity of $\lsharp$ remains the same for the stricter membership query variant, and moreover, we obtain the exact factors of the big-$\CO$ complexity result \cite[Theorem 3.14]{VaandragerGRW22}.

\subsection{Mealy Machines -- Version 2}
As an alternative, we consider Version 2 of the return type of membership
queries, and moreover make the equivalence queries not only return the
counterexample but also the output sequence for the counterexample:
\[
  F(R,X) = \coprod_{n\in \N} A^n\times X^{O^n} + \Mealy \times R \times X^{(A\times O)^*}
  \tag*{\agdarefcustom{Game type for Mealy machines}{Mealy-Full-Trace-Learning}{F}{}}
\]
Thus, the game type specifies that the membership queries produce output words of matching length.
The semantics of the Mealy game type (\agdarefcustom{Semantics of Mealy
Learning}{MealyLearning}{MealyLearning-Semantics}{}) is defined analogously to
that of DFAs. Instantiating the notion of teacher to $F$ yields that a teacher
on a set $T$ is equivalent to maps
\[
  \MQ_n\colon T \times A^n\to O^n \times T
  \qquad
  \EQ\colon T\times \Mealy\to 1 + (A\times O)^*\times T
  \tag*{\agdarefcustom{Teacher for Mealy machines}{Mealy-Full-Trace-Learning}{MealyTeacher}{Conversion functions to generic teacher are at the end of the file.}}
\]
Unsurprisingly, any learning algorithm for Version 1 can be adapted to Version 2:
\begin{proposition}[\agdaref{Mealy-Full-Trace-Reduction}{translate-time-bound}{}]
  Every learning algorithm for Mealy machines (Version 1) can be translated to
  the Version 2 setting with the same bound on queries.
\end{proposition}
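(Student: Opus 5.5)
The plan is to construct the translation as a coalgebra morphism-like transformation on learners and to transfer the step-boundedness data rather than reasoning about games directly. Given a Version~1 learner $(C,c,q_0)$ with $c\colon C\to F_1(R,C)$, where $F_1(R,X)=A^+\times X^O+\Mealy\times R\times X^{A^+}$, I will define a Version~2 learner on the \emph{same} state set $C$ with the same initial state, via a map $\tau_{R,X}\colon F_1(R,X)\to F_2(R,X)$ natural in $X$, and put $c_2 := \tau_{R,C}\cdot c$. On membership queries, $\MQ(w,f)$ with $w\in A^n$, $n\ge 1$, becomes the Version~2 query on the same word with continuation $v\mapsto f(\mathrm{last}(v))$ for $v\in O^n$. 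On equivalence queries, $\EQ(H,r,f)$ becomes $\EQ(H,r,t\mapsto f(\prl^*(t)))$, where $\prl^*\colon (A\times O)^*\to A^*$ extracts the input word. The empty trace, which can never be a genuine counterexample, is mapped to an arbitrary continuation, e.g.\ $f$ applied to some fixed word or the stuck state reached through $f$. Each Version~2 query thus corresponds to exactly one Version~1 query.

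Next, the query bound has to be transferred. The statement says ``same bound on queries'', so I will phrase it via \autoref{thm:learner-correct}: if the Version~1 learner is step-bounded by $b\colon\D\to\N$ with data $\tick,\allows$, then the translated learner is step-bounded by $b$ with the \emph{same} $\tick$ and $\allows$. Items 1--4 of \autoref{defStepBounded} are then literally unchanged, since they refer only to $C$, $q_0$, and $b$. The remaining step is Item~5: for $d\in\allows(q)$, I must show
\[
\sem{\tau(c(q))}_2\modelsall\set{(K,q')\mid d\in K\Rightarrow d\in\allows(q')\wedge \tick(q)<\tick(q')}.
\]
The key lemma is a compatibility between the two semantics: for every continuation slot of $\tau(Z)$ with predicate $K_2$, the corresponding slot of $Z$ has predicate $K_1\supseteq K_2$ and the same successor state. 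This holds because a Mealy machine (with length bound) producing output word $v$ on $w$ in particular produces $\mathrm{last}(v)$ as its last output. Likewise, a trace counterexample refuting $H$ with length $\le\ell$ yields an input word refuting $H$ with the same length. Since the predicate in Item~5 is antitone in $K$, the invariant for $Z$ then implies the invariant for $\tau(Z)$. Finally, \autoref{thm:learner-correct} gives the bound for every Version~2 teacher.

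The main obstacle is the $\EQ$ case. First, I have to fix the Version~2 semantics of traces precisely: does the teacher's reported output sequence carry information, e.g.\ that $M$ produces exactly those outputs, or is only the input part constrained? In either reading $K_2\subseteq K_1$, but the empty trace and the $A^+$ versus $(A\times O)^*$ mismatch have to be treated by observing that its predicate is empty (or contained in $K_1$ of whichever slot we route it to). Second, working with $\modelsall$, which is defined via images of $F(R,P\hookrightarrow X)$, means that I must explicitly build the witness $Z'\in F_2(R,P)$ from the witness for $Z$. I expect this to be a routine but fiddly dependent-pairs construction in Agda.
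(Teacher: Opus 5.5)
\textbf{There is a genuine gap in the equivalence-query case.} It is exactly the subtlety the paper singles out for this proposition.

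Your translation sends a trace $t=(a_1,o_1)\cdots(a_n,o_n)$ to the continuation $f(a_1\cdots a_n)$, i.e.\ it hands the \emph{whole} input word to the Version~1 learner. But the two kinds of counterexample certify different things:
\begin{itemize}
\item a Version~1 counterexample $w\in A^+$ certifies that the \emph{last} output of the hidden machine on $w$ differs from the last output of $H$;
\item a Version~2 trace only certifies a mismatch at \emph{some} position.
\end{itemize}
For a concrete failure, take $n=2$, let $H$ output $o_1'o_2$ on $a_1a_2$ with $o_1'\neq o_1$, and let $M$ output $o_1o_2$. Then $(M,\ell)$ with $\ell\ge 2$ lies in the Version~2 predicate $K_2(t)$. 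It does not lie in $K_1(a_1a_2)$, because the last outputs agree.

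So the inclusion $K_2\subseteq K_1$, on which your whole Item~5 argument rests, is false. This holds under \emph{both} readings of the Version~2 semantics that you consider. Consequently, the original learner's Item~5 at slot $a_1a_2$ tells you nothing about $\allows(f(a_1a_2))$.

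For \lsharp{} this is not a mere proof artefact. In the counterexample-processing phase, $\allows$ requires $M$ to refute the pending hypothesis on the received word in the Version~1 sense. Hence the true machine drops out of $\allows$, and the bisection starts from a false invariant.

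\textbf{The missing idea is to truncate using the hypothesis, which the learner has in hand.}
\begin{itemize}
\item Map $t$ to $f(a_1\cdots a_j)$, where $j$ is the first index at which $o_j$ differs from the $j$-th output of $H$ on $a_1\cdots a_n$.
\item Every $(M,\ell)\in K_2(t)$ then produces $o_j$ at position $j$. Hence $(M,\ell)\in K_1(a_1\cdots a_j)$, with $j\le|t|\le\ell$. Note this is a shorter word, not one of ``the same length''.
\item If no such $j$ exists (including the empty trace), then $K_2(t)=\emptyset$, and any successor state will do.
\end{itemize}
This needs the reading in which the reported outputs are the hidden machine's outputs. Under an inputs-only reading, the learner could not locate the mismatch without extra queries.

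With this change, the rest of your plan goes through: membership queries via the last output symbol, reusing $\tick$ and $\allows$, and antitonicity of Item~5 in $K$.

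Finally, routing through \autoref{defStepBounded} only covers learners that come certified as step-bounded. To transfer a bound known merely as ``finds $d$ within $b(d)$ rounds'', translate each Version~2 teacher into a Version~1 teacher that answers with the same truncated prefix, and compare the two games round by round. The same inclusion is the key step there.
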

In the proof, the only subtlety is the counterexample: a counterexample $w \in
(A\times O)^*$ to a hypothesis by the learner can have the mismatching output
anywhere within the output word, whereas in the previous Version 1, a
counterexample $w'\in A^+$ must lead directly to the mismatching output.

\begin{corollary}[\agdaref{Main-Results}{L\#-Full-Trace.L\#-full-trace-time-bound%
              }{}]
  If membership queries return the entire output word (Version 2),
  then for $k = |A|$ input symbols,
  the $\lsharp$ learning algorithm learns a Mealy machine with $n$ states
  provided with counterexamples of length at most $m$ by the teacher within
  \[
    (k + 1) \cdot n \cdot (n+1)
    + (n + 1) \cdot \lceil \log_2 m \rceil + 1
  \]
  queries.
\end{corollary}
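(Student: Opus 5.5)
The corollary follows by combining \autoref{thmLsharp} with the preceding proposition (the translation from Version~1 to Version~2 learners). Let $(C,c,q_0)$ be the $\lsharp$ learner for the Version~1 game type, and let $\tick$ and $\allows$ be the functions witnessing that it is step-bounded by
\[
  b(M,m) = (k + 1) \cdot |M| \cdot (|M|+1) + (|M| + 1) \cdot \lceil \log_2 m \rceil + 1
\]
over $\D = \Mealy\times\N$. The plan is to build the translated learner $(C',c',q_0')$ for the Version~2 game type, lift $\tick$ and $\allows$ to $C'$, and show it is again step-bounded by the same $b$. \autoref{thm:learner-correct} then yields the claimed query bound against every Version~2 teacher.

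The translation handles the two query types as follows.

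\emph{Membership queries.} A Version~1 membership query $\MQ(w,f)$ with $w\in A^+$, $|w|=n$, becomes the Version~2 query on $A^n$ with continuation $v\mapsto f(\mathrm{last}(v))$ for $v\in O^n$. Semantically, the Version~2 response says that $M$ produces exactly the output word $v$ on $w$. This implies the Version~1 fact that the last output is $\mathrm{last}(v)$, so the predicate attached to the successor under the Version~2 semantics is contained in the one attached under the Version~1 semantics. Since the condition in Item~5 of \autoref{defStepBounded} only requires $d\in K \Rightarrow d\in\allows(q')$, shrinking $K$ preserves it. Taking $\tick' = \tick$ and $\allows' = \allows$ on these states therefore transfers Items 3--5 directly.

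\emph{Equivalence queries.} This is the one subtle step. A Version~2 counterexample $u\in(A\times O)^*$ records the teacher's outputs, but the mismatch with the hypothesis $H$ may occur at any position. The translated learner computes, without issuing a query, the shortest prefix of $u$ whose recorded output differs from $H$'s output, and hands the corresponding input prefix $w'\in A^+$ to the Version~1 continuation. Two cases arise:
\begin{itemize}
\item If there is no such prefix (the trace agrees with $H$, or is empty), the response is inconsistent with the semantics. Such a response allows no $d$, so Item~5 is vacuous and we may continue in any state, e.g.\ $f$ applied to an arbitrary word, or the learner's stuck state.
\item Otherwise, any $(M,\ell)$ consistent with $u$ satisfies $L(M)(w')\neq L(H)(w')$ on the last output and $|w'|\le|u|\le\ell$. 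Hence the Version~2 predicate is again contained in the Version~1 predicate for $w'$, and Item~5 transfers.
\end{itemize}
The length bound $|w'|\le|u|$ is what keeps the parameter $m$ unchanged.

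I expect the main obstacle to be pinning down the Version~2 counterexample semantics precisely. It must state that the recorded outputs are $M$'s true outputs on the inputs of $u$ and that $u$ disagrees with $H$ somewhere; the inclusion in the Version~1 predicate must then be proven for the extracted prefix. In the formalization this amounts to a lemma about the first mismatching index of two output words. Once it is established, Items 3 and 4 are literally unchanged because $\tick$ and $\allows$ are reused, and the corollary follows from \autoref{thm:learner-correct}.
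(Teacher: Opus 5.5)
Your proposal is correct and follows the paper's route. The corollary is obtained by combining \autoref{thmLsharp} with the Version~1-to-Version~2 translation, and the delicate point you isolate is exactly the subtlety the paper highlights: cutting a trace counterexample at its first mismatch, which keeps $|w'|\le|u|$ and places each Version~2 predicate inside the corresponding Version~1 one. The only nuance is that you transfer the step-boundedness witness ($\tick$, $\allows$) and reapply \autoref{thm:learner-correct}, so your translation covers step-bounded learners rather than every Version~1 learner as the proposition is phrased---which is all the corollary needs.
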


\section{Formalization in Agda}\label{agdarefsection}

The results of this paper have been fully formalized in Agda 2.8.0 with the
Agda standard library v2.3. The source code has $\ge 8000$ lines of code and
spans 49 files. The the proofs were written by hand
with the exception of $\lsharp$ proof details in \texttt{Mealy-Full-Trace-Reduction}, \texttt{CartesianSum}, \texttt{Hypothesis},
\texttt{Extend}, \texttt{SaturateBasis}, \texttt{ProcessCE},
\texttt{Correctness}, \texttt{Stepping} and parts of \texttt{Algorithm} which
were created by Claude Fable. The main $\lsharp$ specification in
\texttt{L-Sharp.State} was written by hand.

In the course of the present work, we contributed two new
lemmas on $\log_2$ to the Agda standard library ($2^{\floor{\log_2 n}}\le n$
and $n\le 2^{\ceil{\log_2(n)}}$ in \texttt{Utils.agda}).

The respective HTML files and the Agda source code files can be found on
\begin{center}
\nicehref{\onlineHtmlURL index.html}{\onlineHtmlURL index.html}
\end{center}
and are also directly linked below.

Below we list the \textcolor{srcfilenamecolor}{Agda file} containing the referenced result and (if applicable) mention a concrete identifier (hyperlinked) in this file.

\printcoqreferences

\section{Related Work}
Categorical approaches to automata learning have been studied
extensively~\cite{BarloccoEA19,ColcombetPS21,HeerdtS017,UrbatS20,AristoteGPS25}. In these
frameworks, the categorical abstraction concerns the concept class: the
objects to be learned are modelled as categorical objects (e.g.\ coalgebras for a functor), so that one
learning algorithm uniformly instantiates to different kinds of automata
models, e.g.\ DFAs, Mealy machines, or weighted and nominal automata. The
present work is orthogonal: we model the
\emph{learning algorithm itself} as a coalgebra, whose state space is the
learner's current knowledge and whose transition structure interacts with the
teacher. With this shift of perspective, query complexity becomes a property of this coalgebra
(step-boundedness, \autoref{defStepBounded}), amenable to invariant-style
proof principles. The query complexity of active learning has been analysed
before, both as upper bounds for concrete
algorithms~\cite{ANGLUIN198787,RivestS89,VaandragerGRW22} and as lower
bounds~\cite{BalcazarDG97,KrugerGV23}, but with pen-and-paper proofs, in which
quantifier subtleties remain implicit, because one can simply write \textqt{let
$M$ be the teacher's secret object}. In contrast, reasons about objects that
are consistent with the query history without introspecting the teacher's mind.
There are many formalizations of
automata theoretic results in proof assistants, but we are not aware of any
prior machine-checked treatment of active automata learning, let alone of its
low-polynomial query complexity.

\section{Conclusions and Future Work}
We believe that the present definitions promise a starting point for formalizing
run-time results about learning algorithms in general.
As demonstrated,
the compact definitions of game type, learner, and teacher cover many different variants
of active learning. For
natural numbers, we have seen example implementations of learners for which our
main theorem provided us with a run-time bound in a formalized setting
of Agda. Having proven the bound for $L^\#$~\cite{VaandragerGRW22} already, it
remains for future work to apply this technique to other automata
learning algorithms, e.g.\ $L^*$~\cite{ANGLUIN198787}.

Category-minded readers may have noticed that the functor $M$ in the definition of teacher
is a monad $M$ that models statefulness and
exceptions (\textqt{surrendering})~\cite{Moggi91}. Generalizing $M$ to different
monads may lead to further game types:
\begin{itemize}
\item \textbf{I/O monad:} A general monad would allow teacher implementations
to do actual network I/O to communicate with a remote black-box system. So this is an
instance where teacher does not even know the hidden automaton. Thus, a
formalized learning algorithm can then be applied to actual black box systems.

\item \textbf{Probability distribution monad:} For learning probabiistic
  systems, we can instantiate $M$ with the
  probability distribution monad: Then the actions by the learner stay deterministic,
  but the answers from the teacher are probability distributions.
  This turns the learning game into a probabilistic process.

\end{itemize}

Another direction of future work is to extend the query counter from $\N$ to other
well-ordered sets such that we can analyse the number of membership and equivalence queries
separately in the run-time analysis. Such a general counter mechanism would
also allow to analyse an algoirthm's \emph{symbol complexity}, which does not
only take take the number of queries but also their length into account.

\begin{acks}
The author is grateful for fruitful discussions with Joshua
Moerman on the central notions of this paper; for the suggestion
by Florian Frank to study restricted DFA learning bound; for discussions
with David Wegman on the formal proving capabilities of AI agents;
for comments by Jurriaan Rot and Frits Vaandrager.
\end{acks}

\bibliography{refs}

\end{document}